\DeclareMathAlphabet{\mathsl}{OT1}{cmr}{m}{sl}
\newcommand{\red}[1]{\textcolor{Mahogany}{#1}}
\newcommand{\blue}[1]{\textcolor{MidnightBlue}{#1}}
\long\def\comment#1{}
\newcommand{\eg}{{\em e.g.}}
\newcommand{\ie}{{\em i.e.}}
\newcommand{\etal}{\emph{et al.}}
\newcommand\uSize{\ensuremath{k}}
\newcommand{\Ascr}{\mathcal{A}}
\newcommand{\Pscr}{\mathcal{P}}
\newcommand{\Cscr}{\mathcal{C}}
\newcommand{\Sscr}{\mathcal{S}}
\newcommand{\Tscr}{\mathcal{T}}
\newcommand{\Wscr}{\mathcal{W}}
\newcommand{\CS}{\mathcal{CS}}
\newcommand\next{\mathcal{N}}
\newcommand\critical{\mathcal{X}}
\newcommand\mustTick{\ensuremath{\mathcal{T}}}
\newcommand{\tup}[1]{\langle#1\rangle}
\newcommand\lra{\longrightarrow}
\newcommand\Dmax{D_{max}}
\title{Timed Multiset Rewriting and the Verification of Time-Sensitive Distributed Systems}
\author{Max Kanovich\inst{1,5} Tajana Ban Kirigin\inst{2} Vivek Nigam\inst{3} Andre Scedrov\inst{4,5} and Carolyn Talcott\inst{6}}
\institute{
University College, London, UK,
\email{m.kanovich@ucl.ac.uk}
\and
University of Rijeka, HR,
\email{bank@math.uniri.hr}
\and
Federal University of Paraíba, Jo\~ao Pessoa, Brazil,
\email{vivek@ci.ufpb.br}
\and
University of Pennsylvania, Philadelphia,
USA,
\email{scedrov@math.upenn.edu}
 \and National Research University Higher School of Economics, Moscow, Russia
\and
SRI International, USA, \email{clt@csl.sri.com} 
}
\begin{document}
\maketitle
\begin{abstract}
 Time-Sensitive Distributed Systems (TSDS), such as applications using autonomous drones, achieve goals under possible environment interference (\eg, winds). Moreover, goals are often specified using explicit time constraints which must be satisfied  by the system \emph{perpetually}. For example, drones carrying out the surveillance of some area must always have \emph{recent pictures}, \ie, at most $M$ time units old, of some strategic locations. This paper proposes a Multiset Rewriting language with explicit time for specifying and analysing TSDSes. We introduce two properties, \emph{realizability} (some trace is good) and \emph{survivability} (where, in addition, all admissible traces are good). A good trace is an infinite trace in which goals are perpetually satisfied. We propose a class of systems called \emph{progressive timed systems} (PTS), where intuitively only a finite number of actions can be carried out in a bounded time period. We prove that for this class of systems both the realizability and the survivability problems are PSPACE-complete. Furthermore, if we impose a bound on time (as in bounded model-checking), we show that for PTS, realizability becomes NP-complete, while survivability is in the $\Delta_2^p$ class of the polynomial hierarchy. Finally, we demonstrate that the rewriting logic system Maude can be used to automate time bounded verification of PTS.
 \end{abstract}

\section{Introduction}
The recent years have seen an increasing number of applications where computing is carried out in all sorts of  environments. For example, drones are now being used to carry out tasks such as delivering packages, monitoring plantations and railways. While these distributed systems should still satisfy well-known safety (\eg, drones should not run out of energy) and liveness properties (\eg, freedom of livelock), they are also subject to \emph{quantitative constraints} leading to new verification problems with explicit time constraints. 

Consider, as our running example, the scenario where drones monitor some locations of interest such as infested plantation areas\footnote{See \url{http://www.terradrone.pt/)} -- In Portuguese.}, whether rail tracks are in place\footnote{See \url{http://fortune.com/2015/05/29/bnsf-drone-program/}.}, or locations with high risk of being trespassed. Drones should take a picture of each one of these points. Moreover, for each point, there should be \emph{a recent picture}, \ie, not more than $M$ time units old for some given $M$. That is, the drones should collectively have a set of \emph{recent pictures} of all sensitive locations. In order to achieve this goal, drones may need to fly consuming energy and they may need to return to the base station to recharge their batteries. The environment may interfere as there might be winds that may move the drone to some direction or other flying objects that may block a drone's progression. 

When designing such as system, engineers should specify the behavior of drones, \eg, where to move, when to take a picture, when to return to a base station, etc. A verification problem, called \emph{realizability problem}, is to check, whether under the given time constraints, the specified system can achieve the assigned goal, \eg, always collect a recent picture of the sensitive locations. 

In many settings, the drones themselves or the environment may behave non-determinis\-ti\-cally. For example, if a drone wants to reach a point to the northeast, it may first chose to either move north or east, both being equally likely. Similarly, there might  be some wind at some location causing any drone under the wind's effect to move in the direction of the wind. A stronger property that takes into account such non-determinism is to check whether for all possible outcomes (of drone actions or environment interference), the specified system can achieve the assigned goal. We call this property \emph{survivability}.

In our previous work~\cite{kanovich.mscs,kanovich12rta,kanovich15post}, we proposed a timed Multiset Rewriting (MSR) framework for specifying compliance properties which are similar to \emph{quantitative safety properties} investigating the complexity of a number of decision problems. These properties were defined over the set of \emph{finite traces}, \ie, the execution of a finite number of actions. Realizability and survivability, on the other hand, are similar to \emph{quantitative liveness problems}, defined over infinite traces. 

The transition to properties over infinite traces leads to many challenges as one can easily fall into undecidability fragments of verification problems. A main challenge is to identify the syntatical conditions on specifications so that the survivability and feasibility problems fall into a decidable fragment and at the same time interesting examples can be specified. Also the notion that a system satisfies a property perpetually implies that the desired property should be valid at all time instances independent of environment interference. Another issue is that systems should not be allowed to perform an unbounded number of actions in a single time instance a problem similar to the Zeno paradox.

The main contribution of this paper is threefold:

\begin{enumerate}
  \item  We propose a novel class of systems called \emph{progressive timed systems} (PTS) (Section~\ref{sec:timedmsr}), specified as timed MSR theories, for which, intuitively, only a finite number of actions can be carried out in a bounded time. We demonstrate that our drone example belongs to this class (Section~\ref{sec:progdrones}). We define a language for specifying realizability and suvivability properties (Section~\ref{sec:timedprop}) demonstrating that many interesting problems in 
 Time-Sensitive Distributed Systems (TSDS) can be specified using our language; 

  \item We investigate (Section~\ref{sec:complex}) the complexity of deciding whether a given system satisfies  realizability and survivability. While these problems are undecidable in general, we show that they are PSPACE-complete for PTS. We also show that when we bound time (as in bounded-model checking) the realizability problem for PTS is NP-complete and survivability is in the $\Delta_2^p$ class of the polynomial hierarchy ($P^{NP}$)~\cite{papadimitriou07book}. 

  \item Finally (Section~\ref{sec:simul}), we show that the rewriting logic tool Maude~\cite{clavel-etal-07maudebook} can be used to automate the analysis of TSDS. We implemented the drone scenario described above following the work of Talcott \etal~\cite{talcott16quanticol} and carried out a number of simulations with different instances of this scenario. 
  Our simulations demonstrate that specifiers can quickly find counter-examples where their specifications do not satisfy time bounded survivability. 
\end{enumerate}

We conclude by discussing related and future work (Section~\ref{sec:related}). 


\section{Timed Multiset Rewriting Systems}
\label{sec:timedmsr}
Assume a finite first-order typed alphabet, $\Sigma$, with variables, constants, function and predicate symbols. Terms and facts are constructed as usual (see~\cite{enderton}) by applying symbols of correct type (or sort). 
We assume that the alphabet contains the constant $z : Nat$ denoting zero and the function $s : Nat \to Nat$ denoting the successor function. Whenever it is clear from the context, we write $n$ for $s^n(z)$ and $(n + m)$ for $s^n(s^m(z))$. 

\emph{Timestamped facts} are of the form $F@t$, where $F$ is a fact and $t \in \mathbb{N}$ is natural number called {\em timestamp}. (Notice that timestamps are \emph{not} constructed by using the successor function.) There is a special predicate symbol $Time$ with arity zero, which will be used to represent global time. A {\em configuration} is a multiset of ground timestamped facts, $\Sscr = \{Time@t, F_1@t_1, \ldots, F_n@t_n\}$, with a single occurrence of a $Time$ fact. Configurations are to be interpreted as states of the system. Consider the following configuration where the global time is 4.
\begin{equation}
\begin{scriptsize}
\Sscr_1 = \left\{\begin{array}{l}
  Time@4, Dr(d1,1,2,10)@4, Dr(d2,5,5,8)@4, P(p1,1,1)@3, P(p2,5,6)@0
\end{array}\right \}
\end{scriptsize}
\label{conf-example-1}
\end{equation}  
Fact $Dr(dId,x,y,e)@t$ denotes that drone $dId$ is at position $(x,y)$ at time $t$ with $e$ energy units left in its battery; fact $P(pID,x,y)@t$ denotes that the point to be monitored by $pId$ is at position $(x,y)$ and the last picture of it was taken at time $t$. Thus, the above configuration denotes a scenario with two drones at positions $(1,2)$ and $(5,5)$ and energy left of 10 and 8, and two points to be monitored at positions $(1,1)$ and $(5,6)$, where the former has been taken a photo at time $3$ and the latter at time 0. 

Configurations are modified by multiset rewrite rules which can be interpreted as actions of the system. There is only one rule that modifies global time:
\begin{equation}
\label{eq:tick}
  Time@T \lra Time @ (T+1)
\end{equation}
where $T$ is a time variable. Applied to a configuration, $\{Time@t, F_1@t_1, \ldots, F_n@t_n\}$, it advances global time by one, resulting in $\{Time@(t +1 ), F_1@t_1, \ldots, F_n@t_n\}$.

The remaining rules are \emph{instantaneous} as they do not modify global time, but may modify the remaining facts of configurations (those different from $Time$). Instantaneous rules have the form:
\begin{equation}
\begin{array}{l}
 Time@T, \Wscr, \red{F_1@T_1'}, \ldots, \red{F_n@T_n'} \mid \Cscr \lra 
 Time@T, \Wscr, \blue{Q_1@(T + D_1)}, \ldots, \blue{Q_m@(T + D_m)}
\label{eq:instantaneous}
\end{array}
\end{equation}
where $D_1, \ldots, D_m$ are natural numbers, $\Wscr = W_1@T_1,\ldots,W_n@T_n$ is a set of timestamped predicates possibly with variables, and 
$\Cscr$ is the guard of the action which is a set of constraints
involving the variables appearing in the rule's pre-condition, \ie~the variables $T, T_1, \ldots, T_p, T_1', \ldots, T_n'$. 
Following \cite{durgin04jcs} we say that $\red{F_1@T_1'}, \ldots, \red{F_n@T_n'}$ are consumed by the rule and $\blue{Q_1@(T + D_1)}, \ldots, \blue{Q_m@(T + D_m)}$ are created by the rule. (In a rule, we color \red{red} the consumed facts and \blue{blue} the created facts.) 

Constraints may be of the form:
\begin{equation}
\label{eq:constraints}
  T > T' \pm N \quad \textrm{ and } \quad  T = T' \pm N 
\end{equation}
where $T$ and $T'$ are time variables, and $N\in\mathbb{N}$ is a natural number. All variables in the guard of a rule are assumed to appear in the rule's pre-condition.  We use $T' \geq T' \pm N$ to denote the disjunction of $T > T' \pm N$ and $T = T' \pm N$.  


A rule $W \mid \Cscr \lra W'$ can be applied on a configuration $\Sscr$ if there is a ground substituition $\sigma$, such that $W\sigma \subseteq \Sscr$ and $\Cscr\sigma$ is true. The resulting configuration is $(\Sscr \setminus W) \cup W'\sigma$.
We write $\Sscr \lra_r \Sscr_1$ for the one-step relation where configuration $\Sscr$ is rewritten to $\Sscr_1$ using an instance of rule $r$. 

\begin{definition}
A timed MSR system 
 $\Ascr$ is a set of rules containing only instantaneous rules (Equation~\ref{eq:instantaneous}) and the tick rule (Equation~\ref{eq:tick}).
\end{definition}

A \emph{trace} of a timed MSR $\Ascr$ starting from an initial configuration $\Sscr_0$ is a sequence of configurations where for all $i \geq 0$, $\Sscr_{i} \lra_{r_i} \Sscr_{i+1}$ for some $r_i \in \Ascr$.

\(
\Sscr_0 \lra \Sscr_1 \lra \Sscr_2 \lra \cdots \lra \Sscr_n \lra \cdots  
\)

In the remainder of this paper, we will consider a particular class of timed MSR, called \emph{progressive timed MSR} (PTS), which are such that only a finite number of actions can be carried out in a bounded time interval which is a natural condition for many systems. We built PTS over balanced MSR taken from our previous work~\cite{kanovich11jar}. The balanced condition is necessary for decidability of problems (such as reachability as well as the problems introduced in Section~\ref{sec:timedprop}).


\begin{definition} 
\label{def:balanced}
A timed MSR $\Ascr$ is \emph{balanced} if for all instantaneous rules $r \in \Ascr$, $r$ creates the same number of facts as it consumes, that is, in Eq.~(\ref{eq:instantaneous}), $n = m$.  
\end{definition}

\begin{proposition}
  Let $\Ascr$ be a balanced timed MSR. Let $\Sscr_0$ be an initial configuration with exactly $m$ facts. For all possibly infinite traces $\Pscr$ of $\Ascr$ starting with $\Sscr_0$, all configurations $\Sscr_i$ in $\Pscr$ have exactly $m$ facts.
\end{proposition}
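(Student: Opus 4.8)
The plan is to prove the statement by induction on the index $i$ of the configuration $\Sscr_i$ occurring in the trace $\Pscr$, where the crux is a single-step invariant: \emph{every} rule of $\Ascr$ preserves the total number of facts in a configuration. Once this invariant is established, the claim for arbitrarily long (and even infinite) traces follows at once, since the property holds of $\Sscr_0$ by hypothesis and is transported along each one-step relation $\Sscr_i \lra_{r_i} \Sscr_{i+1}$.

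To establish the single-step invariant I would argue by cases on the applied rule $r_i$. Recall that applying a rule $W \mid \Cscr \lra W'$ via a ground substitution $\sigma$ yields $(\Sscr \setminus W\sigma) \cup W'\sigma$, so the change in cardinality is exactly $|W'\sigma| - |W\sigma| = |W'| - |W|$, as a ground substitution maps each fact to a single fact and hence does not change the number of facts. It therefore suffices to check $|W| = |W'|$ for each rule. For the tick rule (Equation~\ref{eq:tick}) this is immediate, since it removes the single fact $Time@T$ and adds the single fact $Time@(T+1)$. For an instantaneous rule (Equation~\ref{eq:instantaneous}), the $Time$ fact and the context $\Wscr$ appear identically on both sides and thus contribute equally to $|W|$ and to $|W'|$; the only remaining facts are the $n$ consumed facts on the left and the $m$ created facts on the right, and since $\Ascr$ is balanced, $n = m$ by Definition~\ref{def:balanced}. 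Hence $|W| = |W'|$ in both cases.

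Combining these, the inductive step runs as follows: assuming $\Sscr_i$ has exactly $m$ facts and $\Sscr_i \lra_{r_i} \Sscr_{i+1}$, the cardinality preservation just established gives that $\Sscr_{i+1}$ also has exactly $m$ facts. With the base case $\Sscr_0$ supplied by hypothesis, this closes the induction and delivers the conclusion for every configuration in $\Pscr$.

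I do not expect a serious obstacle: the statement is essentially the observation that balancedness, together with the trivially balanced tick rule, is precisely the syntactic condition guaranteeing invariance of configuration size. The one point deserving care is that Definition~\ref{def:balanced} constrains \emph{only} the instantaneous rules, so the tick rule must be treated separately, even though it preserves cardinality trivially. A minor technical remark worth recording is that the multiset difference $\Sscr \setminus W\sigma$ genuinely removes $|W|$ occurrences, since the rule is applicable exactly when $W\sigma \subseteq \Sscr$ holds as multisets, so no spurious cancellation can distort the count.
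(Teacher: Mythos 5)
Your proof is correct and matches the intended argument: the paper states this proposition without proof, treating it as immediate, and your induction on the trace index with a case split between the tick rule (trivially cardinality-preserving) and instantaneous rules (where balancedness gives $n=m$) is exactly the standard reasoning the authors rely on. Your added care about multiset difference under $W\sigma \subseteq \Sscr$ and about applying the invariant at every finite index of a possibly infinite trace is sound and fills in the details the paper leaves implicit.
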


\begin{definition} 
\label{def:progressing}
A timed MSR $\Ascr$ is \emph{progressive} if $\Ascr $ is balanced and for all instantaneous rules $r \in \Ascr$:
\begin{itemize}
  \item rule $r$ creates \emph{at least one} fact with timestamp greater than the global time, that is, in Equation~(\ref{eq:instantaneous}), at least one $D_i \geq 1$;

  \item 
rule $r$ consumes \emph{only} facts with timestamps in the past or at the current time,
 that is, in Equation~(\ref{eq:instantaneous}), the set of constraints $\Cscr$ contains the set $\Cscr_r =  \{T \geq T_i' \mid F_i@T_i', 1 \leq i \leq n\}$.
\end{itemize}   
\end{definition}

The following proposition establishes a bound on the number of instances of instantaneous rules appearing between two consecutive instances of Tick rules, while the second proposition formalizes the intuition that PTS always move forward.

\begin{proposition}
\label{prop:bounded-length}
Let $\Ascr$
be a PTS, $\Sscr_0$  an initial configuration and $m$  the number of facts in   $\Sscr_0$. For all traces $\Pscr$ of $\Ascr$ starting from $ \Sscr_0$, let 

\(
 \Sscr_i \lra_{Tick} \lra \Sscr_{i+1} \lra \cdots \lra \Sscr_j \lra_{Tick} \lra \Sscr_{j+1} 
\)

\noindent
be any sub-sequence of $\Pscr$ with exactly two instances of the Tick rule, one at the beginning and the other at the end. Then $j - i < m$.
\end{proposition}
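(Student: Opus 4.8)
The plan is to exhibit a strictly decreasing monovariant counting, at a fixed global time, how many facts are still available for consumption, and to bound its initial value using balancedness. First I would fix the displayed sub-sequence and note that, since no Tick rule fires strictly between the two shown instances of the Tick rule (Equation~\ref{eq:tick}), every configuration $\Sscr_{i+1},\dots,\Sscr_j$ carries the same $Time@t^\ast$ fact; call $t^\ast$ this common global time. Since a PTS is balanced (Definition~\ref{def:progressing} requires $\Ascr$ to be balanced in the sense of Definition~\ref{def:balanced}), the preceding proposition guarantees that each of these configurations has exactly $m$ facts, one of which is $Time@t^\ast$.

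For a configuration $\Sscr$ at global time $t^\ast$, define $\phi(\Sscr)$ to be the number of non-$Time$ facts $F@s\in\Sscr$ with $s\le t^\ast$, i.e. the facts a progressive rule is permitted to consume. Then $0\le\phi(\Sscr)\le m-1$, the upper bound coming directly from the balanced invariant. The crux of the argument is the following strict-decrease claim: if $\Sscr\lra_r\Sscr'$ is an instantaneous step at time $t^\ast$, then $\phi(\Sscr')\le\phi(\Sscr)-1$. To prove it I would read off both clauses of Definition~\ref{def:progressing} from the rule shape in Equation~\ref{eq:instantaneous}. The consumed facts $F_1@T_1',\dots,F_n@T_n'$ each satisfy $T_k'\le t^\ast$, because the guard contains $\Cscr_r=\{T\ge T_k'\}$, so all $n$ of them are counted by $\phi(\Sscr)$. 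The created facts have the form $Q_\ell@(t^\ast+D_\ell)$ with $D_\ell\ge 0$, hence none has a timestamp below $t^\ast$; moreover at least one $D_\ell\ge 1$, so at most $n-1$ created facts have timestamp exactly $t^\ast$ and are counted by $\phi(\Sscr')$. Since the context $\Wscr$, the $Time$ fact, and every fact of $\Sscr$ not matched by the rule are left untouched and contribute equally on both sides, we obtain $\phi(\Sscr')=\phi(\Sscr)-n+(\text{created facts at }t^\ast)\le\phi(\Sscr)-n+(n-1)=\phi(\Sscr)-1$. Here $n\ge 1$ because balancedness forces the number of created facts to equal $n$ while the progressive clause guarantees at least one (future) created fact.

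Finally I would assemble the bound. An instantaneous rule can fire only when there is at least one consumable fact, i.e. only when $\phi\ge 1$. Starting from $\phi(\Sscr_{i+1})\le m-1$ and decreasing by at least $1$ at each instantaneous step while staying nonnegative, the number of instantaneous steps separating the two Tick instances is strictly smaller than $m$, which is exactly the content of the bound $j-i<m$.

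I expect the only delicate point to be the strict-decrease claim, and specifically the bookkeeping for facts created with delay $D_\ell=0$ (timestamp exactly $t^\ast$): these remain consumable at the same instant, and it is precisely the progressive requirement that some $D_\ell\ge 1$ that keeps the count from merely staying constant and instead forces the strict drop. Care is also needed to confirm that unmatched facts and the $\Wscr$ context never inflate $\phi$, so that the inequality is genuinely driven only by the consumed/created facts of the fired rule.
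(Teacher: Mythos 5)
Your proof is correct, and it is the natural formalization of what the paper leaves implicit: Proposition~\ref{prop:bounded-length} is stated there without any proof, and the intended argument is exactly your monovariant $\phi$ (the number of non-$Time$ facts with timestamp at most the current global time), bounded by $m-1$ via balancedness and strictly decreasing at every instantaneous step via the two progressiveness clauses, with $n\ge 1$ following from balancedness plus the required future-timestamped created fact, just as you say. One caveat on your final sentence: what your monovariant actually yields is that the number of instantaneous steps between the two Tick instances is at most $m-1$, and whether that is literally $j-i<m$ depends on how one parses the paper's garbled display $\Sscr_i \lra_{Tick} \lra \Sscr_{i+1}$; under the reading in which each Tick is a single arrow and the instantaneous steps number $j-i-1$, your bound gives only $j-i\le m$, and $j-i=m$ is then attainable (take $m=2$ with $\{Time@0,\,F@1\}$, a Tick, one rule firing $F@1$ into $G@(T+1)$, and another Tick), so you should either state the conclusion as ``fewer than $m$ instantaneous steps between consecutive Ticks'' or make explicit the indexing convention under which $j-i$ counts those steps, rather than asserting the identification outright.
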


\begin{proposition} 
\label{prop:progressing}
Let $\Ascr $
be a PTS. In all infinite traces of $\Ascr$ the global time tends to infinity. 
\end{proposition}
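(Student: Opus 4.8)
The plan is to reduce the claim to a statement about how often the tick rule is used, and then to rule out the bad case with a strictly increasing bounded potential. First I would observe that the global time of a configuration is read off from its unique $Time$ fact, that it is non-decreasing along any trace (every instantaneous rule leaves $Time$ unchanged, while each instance of the tick rule in Eq.~(\ref{eq:tick}) increases it by exactly one), and that it is integer-valued. Hence the global time of a trace tends to infinity \emph{if and only if} the trace contains infinitely many instances of the tick rule, and it suffices to show that every infinite trace of $\Ascr$ fires the tick rule infinitely often.

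Second, I would argue by contradiction. Suppose some infinite trace $\Pscr$ of $\Ascr$ contains only finitely many instances of the tick rule. Since global time is non-decreasing and integer-valued, there are then an index $N$ and a value $T$ such that every $\Sscr_n$ with $n \geq N$ has global time $T$ and every rule applied from step $N$ onwards is instantaneous; that is, $\Pscr$ has an infinite tail consisting solely of instantaneous rule applications, all performed at the fixed global time $T$. I would stress that Proposition~\ref{prop:bounded-length} cannot be invoked verbatim here, because it presupposes a tick at both ends of the sub-sequence, whereas the tail we must exclude contains no tick at all. Closing this gap is the crux of the proof, and I would handle it with a direct monovariant rather than by citing the earlier bound.

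Third, for a configuration $\Sscr$ of global time $T$, let $\Phi(\Sscr)$ be the number of facts in $\Sscr$ whose timestamp is strictly greater than $T$. Since $\Ascr$ is balanced, every configuration of the trace contains exactly $m$ facts, and because the fact $Time@T$ has timestamp equal to $T$ and is therefore not counted, we get $\Phi(\Sscr) \leq m - 1$. I would then verify that each instantaneous step of the tail increases $\Phi$ by at least one: the second progressiveness condition forces the guard to contain $T \geq T_i'$ for every consumed fact, so no fact of timestamp greater than $T$ is ever consumed; the first progressiveness condition guarantees at least one created fact of timestamp $T + D_i$ with $D_i \geq 1$, hence strictly greater than $T$; and the context $\Wscr$ is preserved unchanged. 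Thus $\Phi$ strictly increases at every step of the infinite tail while staying bounded above by $m-1$, which is impossible. This contradiction shows the tick rule is used infinitely often, so global time tends to infinity.

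Finally, I expect the main obstacle to be exactly the gap flagged above: one must resist reusing Proposition~\ref{prop:bounded-length} off the shelf and instead isolate the correct bounded, strictly monotone potential $\Phi$, being careful that balancedness supplies its upper bound and that \emph{both} clauses of Definition~\ref{def:progressing} are genuinely needed --- the first to force strict growth of $\Phi$, the second to guarantee that facts already beyond $T$ can never be consumed back.
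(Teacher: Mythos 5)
Your proof is correct, and in fact it supplies an argument that the paper itself never writes down: Proposition~\ref{prop:progressing} is stated without proof (neither in the body nor in the appendix), with the text merely gesturing at Proposition~\ref{prop:bounded-length} as the formalization of ``PTS always move forward.'' Your monovariant $\Phi$ --- the number of facts with timestamp strictly above the current global time, bounded by $m-1$ thanks to balancedness and strictly increased by every instantaneous step thanks to the two clauses of Definition~\ref{def:progressing} --- is precisely the counting idea that underlies Proposition~\ref{prop:bounded-length}, so you are on the intended route rather than a genuinely different one. Your flagged subtlety is real and well handled: Proposition~\ref{prop:bounded-length} as literally stated quantifies over sub-sequences with a Tick at \emph{both} ends, so it cannot be cited off the shelf to exclude an infinite tick-free tail, and your direct application of $\Phi$ to the tail closes exactly that gap (indeed the same $\Phi$ argument, applied to the segment following any Tick, also yields Proposition~\ref{prop:bounded-length} itself, so one proof serves both statements). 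All the delicate checks are in place: the guard $\Cscr_r = \{T \geq T_i'\}$ ensures no fact counted by $\Phi$ is ever consumed, every created fact $Q_j@(T+D_j)$ has timestamp at least $T$ with at least one strictly greater, the matching of $Time@T$ in the pre-condition pins the rule's $T$ to the actual global time, and the reduction of ``time tends to infinity'' to ``infinitely many Ticks'' is sound because time is integer-valued, non-decreasing, and advances by exactly one per Tick. Note also that your argument correctly does not invoke the lazy time sampling, matching the proposition's quantification over \emph{all} infinite traces.
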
 

For readability, we will assume from this point onwards that for all rules $r$, the set of its constraints implicitly contains the set $\Cscr_r$ as shown in Definition~\ref{def:progressing}, not writing $\Cscr_r$ explicitly in our specifications.


Finally, notice that PTS has many syntatical conditions, \eg, balanced condition (Definition~\ref{def:balanced}), time constraints (Eq.~\ref{eq:constraints}), instantaneous rules (Eq.~\ref{eq:instantaneous}). Each one of these conditions have been carefully developed as without any of them important verification problems, such as the reachability problem, becomes undecidable as we show in our previous~\cite{kanovich.mscs}. Thus these conditions are needed also for infinite traces. The challenge here of allowing infinite traces is to make sure time advances. The definition of PTS is a simple and elegant way to enforce this. Moreover, as we show in Section~\ref{sec:progdrones}, it is still possible to specify many interesting examples including our motivating example and still prove the decidability of our verification problems involving infinite traces (Section~\ref{sec:complex}).

\vspace{-2mm}
\section{Programming Drone Behavior using PTS}
\label{sec:progdrones}
\vspace{-2mm}
\begin{figure}[t]
\begin{scriptsize}
\[
 \begin{array}{l}
   Time@T, \Pscr(p_1,\ldots,p_n), \red{Dr(Id,X,Y,E+1)@T} \mid doMove(Id,X,Y,E+1,T,T_1,\ldots,T_n,north) \lra \\
\qquad \qquad Time@T, \Pscr(p_1,\ldots,p_n), \blue{Dr(Id,X,Y+1,E)@(T+1)}\\[5pt]

   Time@T, \Pscr(p_1,\ldots,p_n), \red{Dr(Id,X,Y+1,E+1)@T} \mid doMove(Id,X,Y+1,E+1,T,T_1,\ldots,T_n,south) \lra \\
\qquad \qquad Time@T, \Pscr(p_1,\ldots,p_n), \blue{Dr(Id,X,Y,E)@(T+1)}\\[5pt]

   Time@T, \Pscr(p_1,\ldots,p_n), \red{Dr(Id,X+1,Y,E+1)@T} \mid doMove(Id,X+1,Y,E+1,T,T_1,\ldots,T_n,west) \lra \\
\qquad \qquad Time@T, \Pscr(p_1,\ldots,p_n), \blue{Dr(Id,X,Y,E)@(T+1)}\\[5pt]

   Time@T, \Pscr(p_1,\ldots,p_n), \red{Dr(Id,X,Y,E+1)@T} \mid doMove(Id,X,Y,E+1,T,T_1,\ldots,T_n,east) \lra \\
\qquad \qquad Time@T, \Pscr(p_1,\ldots,p_n), \blue{Dr(Id,X,Y,E)@(T+1)}\\[5pt]

   Time@T, \Pscr(p_1,\ldots,p_n), \red{Dr(Id,x_b,y_b,E)@T} \mid doCharge(Id,E,T,T_1,\ldots,T_n) \lra \\
\qquad \qquad Time@T, \Pscr(p_1,\ldots,p_n), \blue{Dr(Id,x_b,y_b,E+1)@(T+1)}\\[5pt]

   Time@T, Pt(p_1,X_1,Y_1)@T_1, \ldots, \red{Pt(p_i,X,Y)@T_i}, \ldots, Pt(p_n,X_n,Y_n)@T_n, \red{Dr(Id,X,Y,E)@T} \\
\qquad \mid doClick(Id,X,Y,E,T,T_1,\ldots,T_i,\ldots,T_n) \lra \\
 Time@T, Pt(p_1,X_1,Y_1)@T_1, \ldots, \blue{Pt(p_i,X,Y)@T}, \ldots, Pt(p_n,X_n,Y_n)@T_n, \blue{Dr(Id,X,Y,E-1)@(T+1)}\\[5pt]

Time@T, \red{Dr(Id,X,Y,E)@T} \mid hasWind(X,Y,north) \lra Time@T, \blue{Dr(Id,X,Y+1,E)@(T+1)}\\[5pt]

Time@T, \red{Dr(Id,X,Y+1,E)@T} \mid hasWind(X,Y,south) \lra Time@T, \blue{Dr(Id,X,Y,E)@(T+1)}\\[5pt]

Time@T, \red{Dr(Id,X+1,Y,E)@T} \mid hasWind(X,Y,west) \lra Time@T, \blue{Dr(Id,X,Y,E)@(T+1)}\\[5pt]

Time@T, \red{Dr(Id,X,Y,E)@T} \mid hasWind(X,Y,east) \lra Time@T, \blue{Dr(Id,X+1,Y,E)@(T+1)}
 \end{array} 
\]   
\end{scriptsize}
\vspace{-3mm}
\caption{Macro rules specifying the scenario where drones take pictures of points of interest. Here $\Pscr(p_1,\ldots,p_n)$ denotes $P(p_1,X_1,Y_1)@T_1, \ldots, P(p_n,X_n,Y_n)@T_n$. Moreover, we assume that the Drone stay in a grid of size $x_{max} \times y_{max}$ and have at most $e_{max}$ energy units.}
\label{fig:rules-complete}
\vspace{-6mm}
\end{figure}

Figure~\ref{fig:rules-complete} depicts the macro rules of our motivating scenario where drones are moving on a fixed grid of size $x_{max} \times y_{max}$, have at most $e_{max}$ energy units and take pictures of some points of interest. We assume that there are $n$ such points $p_1, \ldots, p_n$, where $n$ is fixed, a base station is at position $(x_b,y_b)$, and that the drones should take pictures so that all pictures are recent, that is, the last time a photo of it was taken should not be more than $M$ time units before the current time of any moment.

Clearly if drones choose non-deterministically to move some direction without a particular strategy, they will fail to achieve the assigned goal. A strategy is specified by using time constraints. For this example, the strategy would depend on the difference $T-T_i$, for $1 \leq i \leq n$, specifying the time since the last picture of the point $p_i$ that is the set of time constraints:
\[
  \Tscr(d_1,\ldots,d_n) = \{T - T_1 = d_1, \ldots, T - T_n = d_n\}
\]
where for all $1 \leq i \leq n$ we instantiate $d_i$ by values in $\{0,\ldots,M\}$.


For example, the macro rule with $doMove(Id,X,Y,E+1,T,T_1,\ldots,T_n,north)$ in Figure~\ref{fig:rules-complete} is replaced by the set of rules:
\[
\begin{small}
  \begin{array}{l}
   Time@T, \Pscr(p_1,\ldots,p_n), \red{Dr(d1,0,0,1)@T} \mid \Tscr(0,\ldots,0), DoMv(d1,0,0,1,0,\ldots,0,north) \lra
   \\ \qquad \qquad \qquad Time@T, \Pscr(p_1,\ldots,p_n), \blue{Dr(Id,0,1,0)@(T+1)}\\

   Time@T, \Pscr(p_1,\ldots,p_n), \red{Dr(d1,0,0,1)@T} \mid \Tscr(0,\ldots,1), DoMv(d1,0,0,1,0,\ldots,1,north) \lra
   \\ \qquad \qquad \qquad  Time@T, \Pscr(p_1,\ldots,p_n), \blue{Dr(Id,0,1,0)@(T+1)}\\

\cdots\\

Time@T, \Pscr(p_1,\ldots,p_n), \red{Dr(d2,x_{max},y_{max}-1,e_{max})@T}\\
\qquad \mid \Tscr(M,\ldots,M), DoMv(d2,x_{max},y_{max}-1,e_{max},M,\ldots,M,north) \lra\\
\qquad \qquad \qquad  Time@T, \Pscr(p_1,\ldots,p_n), \blue{Dr(Id,x_{max},y_{max},e_{max}-1)@(T+1)}\\
 \end{array}
\end{small}
\]
where $doMove$ returns a tautology or an unsatisfiable constraint depending on the desired behavior of the drone.

Finally, there are macro rules for moving the drone, taking a picture, charging, and macro specifying winds. While most of the rules have the expected result, we explain the click and wind rules. The click rule is applicable if the drone is at the same position, $(X,Y)$, as a point of interest $p_i$. If applied, the timestamp of the fact $P(p_i,X,Y)$ is updated to the current time $T$. The wind rule is similar to the move rules moving the drone to some direction, but does not cause the drone to consume its energy.

In our implementation, we used a more sophisticated approach described in~\cite{talcott16quanticol} using soft-constraints to specify a drone's strategy. It can be translated as a PTS by incorporating the strategy used as described above.

\paragraph{Other Examples} Finally, there are a number of other examples which we have been investigating and that can are progressive. In~\cite{talcott15wirsing}, we model a simplified version of a package delivery systems inspired by Amazon's Prime Air service. In~\cite{talcott16quanticol}, we model a patrolling bot which moves from one point to another. All these examples seem to be progressive.

Other examples besides those involving drones also seem to be progressive. For example, in our previous work, we specify a monitor for clinical trials~\cite{kanovich.mscs} using our timed MSR framework with discrete time. This specification seems to be also progressive.

\section{Quantitative Temporal Properties}
\label{sec:timedprop}
In order to define quantitative temporal properties, we review the notion of critical configurations and compliant traces from our previous work~\cite{kanovich12rta}. \emph{Critical configuration specification} is a set of pairs $\CS = \{\tup{\Sscr_1, \Cscr_1}, \ldots, \tup{\Sscr_n, \Cscr_n}\}$. Each pair $\tup{\Sscr_j,\Cscr_j}$ is of the form:

\(
  \tup{\{F_1@T_1, \ldots, F_p@T_p\}, \Cscr_j} 
\)

\noindent
where $T_1, \ldots, T_p$ are time variables, $F_1, \ldots, F_p$ are facts (possibly containing variables) and $\Cscr_j$ is a set of time constraints involving only the variables $T_1, \ldots, T_p$. Given a critical configuration specification, $\CS$, we classify a configuration $\Sscr$ as \emph{critical} if for some $1 \leq i \leq n$, there is a grounding substitution, $\sigma$, mapping time variables in $\Sscr_i$ to natural numbers and non time variables to terms such that:
\begin{itemize} 
   \item $\Sscr_i \sigma \subseteq \Sscr$; 
   \item all constraints in $\Cscr_i \sigma$ are valid.
 \end{itemize} 
where substitution application ($\Sscr \sigma$) is defined as usual~\cite{enderton}.

\begin{example}
We can specify usual safety conditions which do not involve time. For example, a drone should never run out of energy. This can be specified by using the following set of critical configuration specification:
\begin{scriptsize}
\[
  \{\tup{\{Dr(Id,X,Y,0)@T\},0} \mid Id \in \{d1,d2\}, X \in \{0,\ldots,x_{max}\}, Y \in \{0,\ldots,y_{max}\}\}  
\]
\end{scriptsize}
\end{example}

\begin{example}
  The following critical configuration specification specifies a quantitative property involving time:
  \begin{scriptsize}
  \[
   \{\tup{\{P(p_1,x_1,y_1)@T_1,Time@T\}, T > T_1 + M}, \ldots, \tup{\{P(p_n,x_n,y_n)@T_n,Time@T\}, T > T_n + M}\}
  \]
  \end{scriptsize}
  Together with the specification in Figure~\ref{fig:rules-complete}, this critical configuration specification specifies that the last pictures of all points of interest ($p_1, \ldots, p_n$ located at $(x_1,y_1),\ldots, (x_n,y_n)$) should have timestamps no more than $M$ time units old. 
\end{example}

\begin{example}
Let the facts $St(Id)@T_1$ and $St(empty)@T_1$ denote, respectively, that at time $T_1$ the drone $Id$ entered the base station to recharge  and that the station is empty. Moreover, assume that only one drone may be in the station to recharge, which would be specified by adding the following rules specifying the drone landing and take off, where $st$ is a constant symbol denoting that a drone landed on the base station:
  \[
  \begin{small}
  \begin{array}{l}
  Time@T,\red{Dr(Id,x_b,y_b)@T},\red{St(empty)@T_1} \lra Time@T,\blue{Dr(Id,st,st)@(T+1)},\blue{St(Id)@T}\\[5pt]
  Time@T,\red{Dr(Id,st,st)@T},\red{St(Id)@T_1} \lra Time@T,\blue{Dr(Id,x_b,y_b)@(T+1)},\blue{St(empty)@T}\\ 
  \end{array}    
  \end{small}
  \]
  Then, the critical configuration specification $\{\tup{\{St(Id)@T_1, Time@T\}, T > T_1 + M_1} \mid Id \in \{d1,d2\}\}$
   specifies that one drone should not remain too long (more than $M_1$ time units) in a base station not allowing other drones to charge.
\end{example}

\begin{definition}
  A trace of a timed MSR  is \emph{compliant} for a given critical configuration specification if it does not contain any critical configuration. 
\end{definition} 


We will be interested in survivability which requires checking whether, given an initial configuration, all possible infinite traces of a system are compliant. In order to define a sensible notion of survivability, however, we need to assume some conditions on when the Tick rule is applicable.  With no conditions on the application of the Tick rule many timed systems of interest, such as our main example with drones, do not satisfy survivability as the following trace containing only instances of the Tick rule could always be constructed:

\(
  \Sscr_1 \lra_{Tick} \Sscr_2 \lra_{Tick} \Sscr_3 \lra_{Tick} \Sscr_4 \lra_{Tick} \cdots
\)


Imposing a \emph{time sampling} is a way to avoid such traces where the time simply ticks. They are used, for example, in the semantics of verification tools such as Real-Time Maude~\cite{olveczky08tacas}. In particular, a time sampling dictates when the Tick rule must be applied and when it cannot be applied. This treatment of time is used both for dense and discrete times in searching and model checking timed systems.


\begin{definition}
 A (possibly infinite) trace $\Pscr$ of a timed MSR $\Ascr$
 uses a \emph{lazy time sampling} if for any occurrence of the Tick rule $\Sscr_i \lra_{Tick} \Sscr_{i+1}$ in $\Pscr$, no instance of any instantaneous rule in $\Ascr$ can be applied to the configuration $\Sscr_i$.  
\end{definition}

In lazy time sampling instantaneous rules are given a higher priority than the Tick rule. Under this time sampling, a drone should carry out one of the rules in Figure~\ref{fig:rules-complete} at each time while time can only advance when all drones have carried out their actions for that moment. This does not mean, however, that the drones will satisfy their goal of always having recent pictures of the points of interest as this would depend on the behavior of the system, \ie, the actions carried out by the drones. Intuitively, the lazy time sampling does not allow the passing of time if there are scheduled drone actions at the current time. Its semantics reflects that all undertaken actions do happen. 

In the remainder of this paper, we fix the time sampling to lazy time sampling. 
 We leave for future work investigating whether they hold for other time samplings.


\vspace{-2mm}
\subsection{Verification Problems}
\vspace{-2mm}



The first property we introduce
 is realizability. Realizability is useful for increasing one's confidence in a specified system, as clearly a system that is not realizable can not accomplish the given tasks (specified by a critical specification) and therefore, the designer would need to reformulate it. However, if a system is shown realizable, the trace, $\Pscr$, used to prove it could also provide insights on the sequence of actions that lead to accomplishing the specified tasks. This may be used to refine the specification reducing possible non-determinism.

\begin{definition}
\label{def:feasibility} 
  A timed MSR $\Ascr$ is \emph{realizable} (resp., \emph{$n$-time-bounded realizable}) with respect to the lazy time sampling, a critical configuration specification $\CS$ and an initial configuration $\Sscr_0$ if there \emph{exists a trace}, $\Pscr$, that starts with $\Sscr_0$ and uses the lazy time sampling 
 such that:
 \begin{enumerate}
   \item $\Pscr$ is compliant with respect to $\CS$;
   \item Global time tends to infinity (resp., global time advances by exactly $n$ time units) in  $\Pscr$.
 \end{enumerate}
\end{definition}

The second condition that global time tends to infinity, which implies that only a finite number of actions are performed in a given time. Another way of interpreting this condition following \cite{alpern87dc} is of a liveness condition, that is, the system should not get stuck. The first condition, on the other hand, is a safety condition as it states that no bad state should be reached. Thus the feasibility problem (and also the survivability problem introduced next) is a combination of a liveness and safety conditions. Moreover, since $\CS$ involve time constraints, it is a quantitative liveness and safety property.

The $n$-time-bounded realizability problem is  motivated by bounded model checking. We look for a finite compliant trace that spreads over a $n$ units of time, where $n$ is fixed.

As already noted, realizability could be useful in reducing non-determinism in the specification.
In many cases, however, it is not desirable and even not possible to eliminate the non-determinism of the system. For example, in open distributed systems, the environment can play an important role. Winds, for example, may affect drones' performances such as the speed and energy required to move from one point to another.
 We would like to know whether for all possible decisions taken by agents and under the interference of the environment, the given timed MSR accomplishes the specified tasks. \emph{If so, we say that a system satisfies survivability.} 

\begin{definition} 
\label{def:survivabilty}
  A timed MSR $\Ascr$ satisfies \emph{survivability} (resp., \emph{$n$-time-bounded survivability}) with respect to the lazy time sampling,  a critical configuration specification $\CS$ and an  initial configuration $\Sscr_0$ if it is realizable (resp., $n$-time-bounded realizable) and if \emph{all infinite traces} (resp. \emph{all traces with exactly $n$ instances of the Tick rule}), $\Pscr$, that start  with $\Sscr_0$ and use the lazy time sampling are such that:
 \begin{enumerate}
   \item $\Pscr$ is compliant with respect to $\CS$;
   \item The global time tends to infinity (resp., no condition).
 \end{enumerate}
\end{definition}

\vspace{-2mm}
\section{Complexity Results}
\label{sec:complex}
\vspace{-2mm}

Our complexity results,  for a given PTS $\Ascr$, an  initial configuration $\Sscr_0$ and a critical configuration specification $\CS$, will mention the value $\Dmax$ which is an upper-bound on the natural numbers appearing in $\Sscr_0$, $\Ascr$ and $\CS$. $\Dmax$ can be inferred syntactically by simply inspecting the timestamps of $\Sscr_0$, the $D$ values in timestamps of rules (which are of the form $T + D$) and constraints in $\Ascr$ and $\CS$ (which are of the form $T_1 > T_2 + D$ and $T_1 = T_2 + D$). For example, the $\Dmax = 1$ for the specification in Figure~\ref{fig:rules-complete}.

The size of a timestamped fact $P@T$, written $|P@T|$ is the total number of alphabet symbols appearing in $P$. For instance, $|P(s(z),f(a,X), a)@12| = 7$. For our complexity results, we assume a bound, $k$, on the size of facts. 
For example, in our specification in Figure~\ref{fig:rules-complete},  we can take the bound
~$ k= |x_{max}| +|y_{max}|+ |e_{max}|+5$. Without this bound (or other restrictions), any interesting decision problem is undecidable by encoding the Post correspondence problem~\cite{durgin04jcs}. 

Notice that we do not always impose an upper bound on the values of timestamps. 

Assume throughout this section the following: (1) $\Sigma$ -- A finite alphabet with $J$ predicate symbols and $E$ constant and function symbols; $\Ascr$ -- A PTS constructed over $\Sigma$; $m$ -- The number of facts in the initial configuration $\Sscr_0$; $\CS$ -- A critical configuration specification constructed over $\Sigma$; $k$ -- An upper-bound on the size of facts; $\Dmax$ -- An upper-bound on the numeric values of $\Sscr_0, \Ascr$ and $\CS$.

\subsection{PSPACE-Completeness}

In order to prove the PSPACE-completeness of realizability and survivability problems, we review the machinery introduced in our previous work~\cite{kanovich.mscs} called $\delta$-configuration. 

For a given $\Dmax$ the \emph{truncated time difference} of two timed facts
 $P@t_1$ and $Q@t_2$ with \mbox{$t_1\leq t_2$}, denoted by
 $\delta_{P,Q}$, is defined as follows:

\(
 \delta_{P,Q} = 
\left\{\begin{array}{l}
 t_2 - t_1, \textrm{ provided } t_2 - t_1 \leq \Dmax\\
 \infty, \textrm{ otherwise }
\end{array}\right.
\)

Let $\Sscr$,  $   \Sscr = Q_1@t_1, Q_2@t_2, \ldots, Q_n@t_n  $,
be a  configuration of a timed MSR $\Ascr$
 written in canonical way where the sequence of
timestamps $t_1, \ldots, t_n$ is non-decreasing. The $\delta$-configuration of $\Sscr$ for a given $\Dmax$ is 

\(
 \delta_{\Sscr,\Dmax} = [Q_1,\delta_{Q_1,Q_2},Q_2, \ldots, Q_{n-1}, \delta_{Q_{n-1},Q_n}, Q_n] \ .
\)

In our previous work~\cite{kanovich12rta,kanovich.mscs}, we showed that a $\delta$-configuration is an equivalence class on configurations. Namely, for a given $\Dmax$, we declare $\Sscr_1 $ and $ \Sscr_2$ equivalent, written  $\Sscr_1 \equiv_{\Dmax} \Sscr_2$, 
 if and only if their $\delta$-configurations are exactly the same.
Moreover, we showed that there is a bisimulation between (compliant) traces over configurations and (compliant) traces over their $\delta$-configurations in the following sense: if $\Sscr_1 \lra \Sscr_2$ and $\Sscr_1 \equiv_{\Dmax} \Sscr_1'$, then there is a trace $\Sscr_1' \lra \Sscr_2'$ such that $\Sscr_2 \equiv_{\Dmax} \Sscr_2'$. This result appears in ~\cite[Corollary 7]{kanovich12rta} and more details can be found in Appendix \ref{sec: app-bisimulation}.

Therefore, in the case of balanced timed MSRs, we can work on traces constructed using $\delta$-configurations. Moreover, the following lemma establishes a bound on the number of different $\delta$-configurations. The proof can be found in Appendix \ref{sec: app-number-delta}.

\begin{lemma}\label{lemma:numstates}
Assume $\Sigma,\Ascr, \Sscr_0,m, \CS,k,\Dmax$ as described above.
The number of different $\delta$-configurations, denoted by $L_\Sigma(m,\uSize,\Dmax)$ is such that

\(
 L_\Sigma(m,k,\Dmax) \leq $ $ (\Dmax + 2)^{(m-1)} J^m  (E + 2 m k)^{m k}.
\)
\end{lemma}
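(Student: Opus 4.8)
The plan is to count $\delta$-configurations by the product rule, viewing the interleaved sequence $[Q_1,\delta_{Q_1,Q_2},Q_2,\ldots,\delta_{Q_{m-1},Q_m},Q_m]$ as three independent pieces of data: the $m$ predicate symbols heading the facts, the argument contents of those facts, and the $m-1$ truncated time differences between consecutive facts. Since a balanced PTS keeps the number of facts fixed at $m$ (by the proposition on balanced systems, every configuration reachable from an $m$-fact $\Sscr_0$ again has $m$ facts), every reachable $\delta$-configuration has exactly this shape. Hence it suffices to bound the number of choices for each piece and multiply, which is precisely why the claimed bound factors as $(\Dmax+2)^{(m-1)}$ times $J^m$ times $(E+2mk)^{mk}$.

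First I would dispatch the truncated time differences. By the definition of $\delta_{P,Q}$, each $\delta_{Q_i,Q_{i+1}}$ is either an integer in $\{0,1,\ldots,\Dmax\}$ or the symbol $\infty$, so it ranges over a set of exactly $\Dmax+2$ values. As the canonical (non-decreasing) ordering produces $m-1$ consecutive gaps, these contribute the factor $(\Dmax+2)^{m-1}$. This step is immediate and accounts for the first factor.

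The heart of the argument is bounding the number of ground facts. Each of the $m$ facts is headed by one of the $J$ predicate symbols, contributing $J^m$. For the arguments I would write each fact in prefix (Polish) notation as a string of function and constant symbols; since each fact has size at most $k$ and its predicate already accounts for one of those symbols, the $m$ facts together occupy at most $mk$ argument positions. At each position one may place either one of the $E$ function and constant symbols of $\Sigma$ or one of the fresh values that a configuration may contain in the general framework; a configuration of total size at most $mk$ can contain at most $mk$ distinct constants beyond $\Sigma$, and since $\equiv_{\Dmax}$ and its associated bisimulation are insensitive to renaming of such fresh values, we may draw them from a fixed canonical pool, bounded generously by $2mk$ names. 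Thus each of the at most $mk$ positions admits at most $E+2mk$ symbols, and because every valid ground fact corresponds to such a string while invalid strings only lower the count, the arguments contribute at most $(E+2mk)^{mk}$. Multiplying the three factors yields the stated bound.

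The step I expect to be the main obstacle is the fact-counting, specifically justifying that the per-position alphabet is finite: I must argue that only boundedly many distinct fresh constants can matter — at most one per symbol position — and that counting $\delta$-configurations up to renaming of these values is sound. The remaining counts are routine product-rule applications. I would emphasize that the resulting bound is deliberately loose (it treats each position's symbol choice as independent of the underlying tree shape), which is harmless because Lemma~\ref{lemma:numstates} asserts only an upper bound; for the subsequent PSPACE argument all that is needed is that $L_\Sigma(m,k,\Dmax)$ be singly exponential in the size of the input, which this product manifestly is.
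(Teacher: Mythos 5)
Your proof is correct and follows essentially the same route as the paper's: both count the $m-1$ truncated time differences (each ranging over $\{0,\ldots,\Dmax\}\cup\{\infty\}$, hence $\Dmax+2$ values), the $m$ predicate-name slots ($J^m$), and the at most $mk$ argument-symbol slots filled from the $E$ alphabet symbols plus a canonical pool of $2mk$ nonce names, and multiply. Your justification of the $2mk$ nonce pool via invariance of $\equiv_{\Dmax}$ under renaming is in fact slightly more explicit than the paper, which simply cites prior work for that step.
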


\vspace{-2mm}
\subsubsection{Infinite Traces}
Our previous work only dealt with \emph{finite traces}. The challenge here is to deal with infinite traces and in particular the feasibility and survivability problems. These problems are new and as far as we know have not been investigated in the literature (see Section~\ref{sec:related} for more details).


PSPACE-hardness of both the realizability and survivability can be shown by adequately adapting our previous work~\cite{kanovich11jar} (shown in the Appendix~\ref{sec: app-PSPACE}). We therefore show PSPACE-membership of these problems.

Recall that a system is realizable if there is a compliant infinite trace $\Pscr$ in which the global time tends to infinity.
Since $\Ascr$ is progressive, we get the condition on time from Proposition~\ref{prop:progressing}. We, therefore, need to construct a compliant infinite trace. The following lemma estrablishes a criteria:
\begin{lemma}
\label{lem:lengthPSPACE}
Assume $\Sigma,\Ascr, \Sscr_0, m,\CS,k,\Dmax$ as described above. If there is a compliant trace (constructed using $\delta$-configurations) starting with (the $\delta$-representation of) $\Sscr_0$  with length $L_\Sigma(m,k,\Dmax)$, then there is an infinite compliant trace starting with (the $\delta$-representation of) $\Sscr_0$. 
\end{lemma}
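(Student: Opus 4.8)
The plan is to argue by pigeonhole on the finite state space of $\delta$-configurations. By Lemma~\ref{lemma:numstates} there are at most $L_\Sigma(m,k,\Dmax)$ distinct $\delta$-configurations, whereas a trace of length $L_\Sigma(m,k,\Dmax)$ passes through $L_\Sigma(m,k,\Dmax)+1$ configurations. First I would make this counting precise, so that the number of visited positions strictly exceeds the number of available $\delta$-configurations and at least one coincidence is forced: there are indices $i < j$ such that the $\delta$-configuration at step $i$ equals that at step $j$, i.e. $\Sscr_i \equiv_{\Dmax} \Sscr_j$.

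Having located a repeated $\delta$-configuration, I would extract the loop: the segment of transitions from step $i$ to step $j$ begins and ends at the same $\delta$-configuration. The key step is then to iterate this loop indefinitely. Since we work modulo $\equiv_{\Dmax}$, I would appeal to the bisimulation between configurations and their $\delta$-configurations established in~\cite{kanovich12rta} to lift each traversal of the loop into a genuine sequence of rule applications of $\Ascr$: because step $j$ is $\equiv_{\Dmax}$-equivalent to step $i$, every transition taken at step $i$ can be mimicked starting from step $j$, reaching a configuration again equivalent to the one obtained after that transition in the original trace. Concatenating countably many copies of the loop yields an infinite trace of $\Ascr$ in which every configuration is $\equiv_{\Dmax}$-equivalent to some configuration of the original finite trace.

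Two things then remain to be verified. For compliance, I would observe that criticality is an invariant of $\equiv_{\Dmax}$: the constraints in $\CS$ compare timestamps only through additive offsets bounded by $\Dmax$, so whether a configuration is critical is determined entirely by its $\delta$-configuration. Since the original trace is compliant, every $\delta$-configuration occurring in the loop is non-critical, and hence so is every configuration of the constructed infinite trace. I would likewise note that the lazy time sampling condition, which depends only on the applicability of instantaneous rules, is itself preserved under $\equiv_{\Dmax}$ and therefore survives the lifting. For the time condition no separate work is needed: the result is an infinite trace of a PTS, so Proposition~\ref{prop:progressing} guarantees that the global time tends to infinity.

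The main obstacle, I expect, is the faithful iteration of the loop rather than the counting: one must ensure that repeating the loop infinitely produces a legitimate trace, respecting both the rule instances and the lazy time sampling, which is exactly what the $\delta$-configuration bisimulation delivers. Once that machinery is invoked, compliance and the progress of time follow from $\equiv_{\Dmax}$-invariance and Proposition~\ref{prop:progressing}, respectively.
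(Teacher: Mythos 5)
Your proposal is correct and follows exactly the argument the paper intends for this lemma (which it leaves implicit): pigeonhole over the at most $L_\Sigma(m,k,\Dmax)$ distinct $\delta$-configurations forces a repetition along the trace, the resulting cycle is iterated forever---legitimately, because action application, criticality, and the lazy-time-sampling condition are all invariant under $\equiv_{\Dmax}$ by the bisimulation result of Appendix~\ref{sec: app-bisimulation}---and the divergence of global time comes from Proposition~\ref{prop:progressing}. No gaps; your reading of ``length'' as the number of transitions (so that $L_\Sigma(m,k,\Dmax)+1$ configurations are visited) is the one that makes the counting work and is clearly what the paper means.
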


Assume that for any given timed MSR $\Ascr$, an initial configuration $\Sscr_0$ and a critical configuration specification $\CS$
 we have two functions $\next$ and $\critical$ which
check, respectively, whether a rule in $\Ascr$ is applicable to a given $\delta$-configuration and whether a $\delta$-configuration is critical with respect to $\CS$. Moreover, let \mustTick\ be a function implementing the lazy time sampling. It takes a timed MSR and a $\delta$-configuration of that system, and returns 1 when the tick must be applied and  0 when it must not be applied.  We assume that $\next$, $\critical$ and \mustTick\ run in Turing time bounded by a polynomial in $m,k,\log_2(\Dmax)$. Notice that for our examples this is the case. Because of Lemma~\ref{lem:lengthPSPACE}, we can show that the realizability problem is in PSPACE by searching for compliant traces of length $L_\Sigma(m,k,\Dmax)$ (stored in binary). To do so, we rely on the fact that PSPACE and NPSPACE are the same complexity class~\cite{savitch}. 

\begin{theorem}
\label{th:PSPACE-feasibility}
 Assume $\Sigma$ a finite alphabet, $\Ascr$ a PTS, an initial configuration $\Sscr_0$, $m$ the number of facts in $\Sscr_0$, $\CS$ a critical configuration specification, $k$ an upper-bound on the size of facts, $\Dmax$ an upper-bound on the numeric values in $\Sscr_0, \Ascr$ and $\CS$, and the functions $\next, \critical$ and \mustTick\ as described above. There is an algorithm that, given an initial configuration $\Sscr_0$, decides whether $\Ascr$ is realizable with respect to the lazy time sampling, $\CS$ and $\Sscr_0$ and the algorithm runs in space bounded by a polynomial in $m,k$ and $log_2(\Dmax)$. 
\end{theorem}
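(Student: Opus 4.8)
The plan is to establish PSPACE-membership by exhibiting a nondeterministic algorithm that searches for a compliant trace over $\delta$-configurations and runs in space polynomial in $m, k, \log_2(\Dmax)$; PSPACE then follows from Savitch's theorem. The key insight, supplied by Lemma~\ref{lem:lengthPSPACE}, is that realizability over infinite traces reduces to the existence of a \emph{finite} compliant trace of length $L_\Sigma(m,k,\Dmax)$ over $\delta$-configurations. Since $\Ascr$ is progressive, Proposition~\ref{prop:progressing} guarantees that global time tends to infinity in any infinite trace, so the only nontrivial requirement is compliance; thus it suffices to find a compliant path of the prescribed length in the transition graph whose vertices are $\delta$-configurations.

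First I would describe the nondeterministic search. The algorithm maintains (i) the \emph{current} $\delta$-configuration, (ii) a binary step-counter running up to $L_\Sigma(m,k,\Dmax)$, and (iii) the initial $\delta$-representation of $\Sscr_0$ for comparison. Starting from the $\delta$-representation of $\Sscr_0$, at each step the algorithm first calls $\critical$ to verify the current $\delta$-configuration is not critical with respect to $\CS$ (rejecting if it is), then consults $\mustTick$ to determine, under the lazy time sampling, whether the Tick rule must or must not be applied, and finally nondeterministically guesses an applicable rule (verified via $\next$) and rewrites to a successor $\delta$-configuration, incrementing the counter. If the counter reaches $L_\Sigma(m,k,\Dmax)$ while all configurations encountered have been compliant, the algorithm accepts. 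By Lemma~\ref{lem:lengthPSPACE}, such an accepting computation exists if and only if an infinite compliant trace exists.

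The central point is the \textbf{space bound}. A single $\delta$-configuration consists of at most $m$ facts, each of size bounded by $k$, interleaved with at most $m-1$ truncated time differences, each of which is either a value in $\{0,\ldots,\Dmax\}$ or the symbol $\infty$; storing each such value in binary costs $O(\log_2(\Dmax))$ bits. Hence a $\delta$-configuration occupies space polynomial in $m, k, \log_2(\Dmax)$. The step-counter ranges up to $L_\Sigma(m,k,\Dmax)$, which by Lemma~\ref{lemma:numstates} is at most $(\Dmax+2)^{m-1}J^m(E+2mk)^{mk}$; taking $\log_2$ shows the counter needs only $O\bigl((m+mk)\log_2(\Dmax + mk) + m\log_2 J\bigr)$ bits, again polynomial in the relevant parameters. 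By hypothesis $\next$, $\critical$, and $\mustTick$ each run in Turing time—and hence space—polynomial in $m, k, \log_2(\Dmax)$, so every subroutine call stays within the budget. Since we reuse the working space across steps rather than recording the whole trace, the total space remains polynomial.

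I expect the main obstacle to be purely bookkeeping rather than conceptual: carefully verifying that the binary encoding of $L_\Sigma(m,k,\Dmax)$ and of the $\delta$-configurations genuinely fits in polynomial space—in particular that the exponentially large bound on trace length contributes only polynomially to the counter size once taken logarithmically—and confirming that the lazy-time-sampling constraint imposed by $\mustTick$ is respected at every nondeterministic choice. The conversion from the nondeterministic procedure to a deterministic PSPACE algorithm is then immediate from the equality $\mathrm{PSPACE} = \mathrm{NPSPACE}$~\cite{savitch}, completing the argument.
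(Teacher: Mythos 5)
Your proposal is correct and follows essentially the same route as the paper's own proof: a nondeterministic search over $\delta$-configurations bounded by $L_\Sigma(m,k,\Dmax)$ via Lemma~\ref{lem:lengthPSPACE}, with Proposition~\ref{prop:progressing} discharging the time-divergence condition, compliance checked by $\critical$, the lazy sampling enforced by $\mustTick$, and the same polynomial-space accounting (binary counter of size $\log_2 L_\Sigma(m,k,\Dmax)$ plus constantly many $\delta$-configurations) followed by determinization via Savitch's theorem. The only differences are cosmetic bookkeeping details (e.g., storing the initial $\delta$-representation, which is unnecessary), so there is no gap to report.
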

The polynomial is in fact $\log_2(L_\Sigma(m,k,\Dmax))$ and the proof is in Appendix \ref{sec: app-PSPACE feas}.

We now consider the survivability problem.  Recall that in order to prove that $\Ascr$ satisfies survivability with respect to the lazy time sampling, $\CS$ and $\Sscr_0$, we must show that $\Ascr$ is realizable 
and that for all infinite traces $\Pscr$ starting with $\Sscr_0$ (Definition~\ref{def:survivabilty}):
\begin{enumerate}
  \item $\Pscr$ is compliant with respect to $\CS$;
  \item The global time in $\Pscr$ tends to infinity.
\end{enumerate}
Checking that a system is realizable is PSPACE-complete as we have just shown. Moreover, the second property (time tends to infinity) follows from Proposition~\ref{prop:progressing} for progressive timed MSR. It remains to show that all infinite traces using the lazy time sampling are compliant, which reduces to checking that \emph{no critical configuration is reachable} from the initial configuration $\Sscr_0$ by a trace using the lazy time sampling. This property can be decided in PSPACE by relying on the fact that PSPACE, NPSPACE and co-PSPACE are all the same complexity class~\cite{savitch}. Therefore, survivability is also in PSPACE as states the following theorem. Its proof can be found in Appendix \ref{sec: survive-PSPACE}.
\begin{theorem}
\label{th:PSPACE-survivability}
 Assume $\Sigma,\Ascr,\Sscr_0, m,\CS,k,\Dmax$ and the functions $\next, \critical$ and \mustTick\ as described in Theorem~\ref{th:PSPACE-feasibility}. There is an algorithm that decides whether $\Ascr$ satisfies the survivability problem with respect to the lazy time sampling, $\CS$ and $\Sscr_0$ which runs in space bounded by a polynomial in $m,k$ and $log_2(\Dmax)$. 
\end{theorem}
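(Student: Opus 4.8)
The plan is to decompose survivability precisely along the three requirements of Definition~\ref{def:survivabilty} and to discharge each of them within polynomial space, exploiting the closure of PSPACE under finite intersection. The first conjunct, realizability, is already placed in PSPACE by Theorem~\ref{th:PSPACE-feasibility}, so it costs nothing new. The condition that global time tends to infinity in every infinite trace is automatic for a PTS by Proposition~\ref{prop:progressing}, so there is nothing to verify there either. The entire remaining burden is therefore the condition that \emph{every} infinite trace starting from $\Sscr_0$ under the lazy time sampling is compliant with respect to $\CS$.

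The key reduction I would establish is the equivalence between the statement that every infinite lazy trace from $\Sscr_0$ is compliant and the statement that no critical configuration is reachable from $\Sscr_0$ by a finite lazy trace. The ($\Leftarrow$) direction is immediate: if no critical configuration is ever reachable, then no trace, finite or infinite, can contain one, so all infinite traces are compliant. For ($\Rightarrow$) I would argue the contrapositive. Suppose some critical configuration $\Sscr_c$ is reachable by a finite lazy trace. Since the Tick rule applies to any configuration and the lazy time sampling forbids ticking only when an instantaneous rule applies, every configuration has at least one admissible successor; hence the finite trace can be prolonged indefinitely into an infinite lazy trace passing through $\Sscr_c$. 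Because $\Ascr$ is progressive, Proposition~\ref{prop:progressing} guarantees this prolongation has time tending to infinity, so it is a genuine infinite trace of the kind quantified over in Definition~\ref{def:survivabilty}, and it is non-compliant because it contains $\Sscr_c$, contradicting the left-hand side.

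Having established this reduction, I would decide non-reachability of critical configurations in PSPACE exactly as in the realizability argument, working over $\delta$-configurations. The switch to $\delta$-configurations is justified by the bisimulation result, together with the fact that criticality and rule applicability under the lazy sampling are determined by the $\delta$-configuration via $\critical$, $\next$, and $\mustTick$. A non-deterministic machine guesses a lazy trace one step at a time, consulting $\mustTick$ at each step to respect the priority of instantaneous rules over Tick, and testing with $\critical$ whether the current $\delta$-configuration is critical. By Lemma~\ref{lemma:numstates} there are at most $L_\Sigma(m,k,\Dmax)$ distinct $\delta$-configurations, so if any critical configuration is reachable it is reachable within $L_\Sigma(m,k,\Dmax)$ steps; a step counter stored in binary occupies space $\log_2 L_\Sigma(m,k,\Dmax)$, which is polynomial in $m$, $k$, and $\log_2 \Dmax$. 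This yields an NPSPACE decision procedure for reachability of a critical configuration, and by Savitch's theorem~\cite{savitch}, PSPACE, NPSPACE, and co-PSPACE coincide, so non-reachability lies in PSPACE as well. Conjoining the PSPACE test for realizability with the PSPACE test for non-reachability, and using closure of PSPACE under intersection, places survivability in PSPACE within the stated bound.

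I expect the main obstacle to be the careful handling of the ($\Rightarrow$) direction of the reduction under the lazy time sampling: one must confirm that the prolongation of a finite lazy trace can always be chosen to remain lazy and to have genuinely unbounded time, so that the counterexample produced is truly an \emph{infinite} compliant-violating trace in the precise sense of Definition~\ref{def:survivabilty}, rather than merely a long finite one. Once this equivalence is secured, the remaining ingredients — the $\delta$-configuration count of Lemma~\ref{lemma:numstates}, the polynomial-space implementations of $\next$, $\critical$, and $\mustTick$, and the appeal to Savitch — are routine.
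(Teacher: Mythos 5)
Your proposal is correct and follows essentially the same route as the paper: the same decomposition into a realizability check (Theorem~\ref{th:PSPACE-feasibility}), time divergence for free from Proposition~\ref{prop:progressing}, and reduction of compliance of all infinite lazy traces to non-reachability of a critical configuration, decided by a nondeterministic search over $\delta$-configurations with a binary step counter bounded by $L_\Sigma(m,k,\Dmax)$ and then determinized and complemented via Savitch's theorem~\cite{savitch}. The only difference is that you make explicit the prolongation argument justifying the reduction (every configuration has an admissible lazy successor, so a finite trace reaching a critical configuration extends to a genuine infinite non-compliant trace), a step the paper's proof treats as immediate.
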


\begin{corollary}
  Both the realizability and the survivability problem for PTS are PSPACE-complete when assuming a bound on the size of facts.
\end{corollary}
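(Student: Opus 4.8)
The plan is to establish PSPACE-completeness by combining an upper bound (membership) with a matching lower bound (hardness). The membership direction is immediate: Theorem~\ref{th:PSPACE-feasibility} shows that realizability is decidable in space polynomial in $m$, $k$ and $\log_2(\Dmax)$, and Theorem~\ref{th:PSPACE-survivability} gives the same bound for survivability. Since the bound $k$ on the size of facts is part of the hypothesis, both problems lie in PSPACE, and it remains only to prove PSPACE-hardness for each.

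For hardness, I would reduce from the reachability problem for balanced timed MSR, which is PSPACE-complete by our earlier work~\cite{kanovich11jar}. Given an instance of reachability---an initial configuration, a balanced theory, and a goal configuration---I would build a PTS, an initial configuration, and a critical configuration specification such that the constructed system is realizable (resp.\ satisfies survivability) if and only if the goal is reachable (resp.\ unavoidable). Concretely, I would attach timestamps to the source rules and impose a uniform tagging of created facts with timestamp $T+1$, so that the progressive conditions of Definition~\ref{def:progressing} hold: each rule then creates at least one fact strictly in the future and consumes only present-or-past facts, while the balanced condition is inherited from the source theory. I would encode the goal as (the negation of) the critical configuration specification, so that a compliant trace is precisely one that drives the simulation toward the goal and, once the goal has been reached, parks the system in an idle loop whose only effect is to let time tick to infinity---guaranteeing condition~2 of Definition~\ref{def:feasibility} via Proposition~\ref{prop:progressing}.

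The main obstacle I anticipate is the mismatch between the \emph{finite}-trace nature of reachability and the \emph{infinite}-trace requirements of realizability and survivability, compounded by the lazy time sampling. Under lazy time sampling the Tick rule fires only when no instantaneous rule applies, so the idle loop that absorbs time after the goal has been reached must be engineered to be the unique applicable behavior in the ``done'' state; otherwise spurious instantaneous steps could either block the tick or introduce noncompliant configurations. For survivability one must additionally ensure that \emph{every} admissible trace, not merely some trace, corresponds to a faithful simulation of the source computation, which typically requires making the simulation deterministic up to the nondeterminism already present in the source instance and confining all remaining branching to benign, compliant alternatives. Once this bookkeeping is in place, the balanced and progressive conditions are preserved by construction, the reduction is computable in polynomial time, and PSPACE-hardness transfers to both problems; together with the membership results of Theorems~\ref{th:PSPACE-feasibility} and~\ref{th:PSPACE-survivability} this yields PSPACE-completeness.
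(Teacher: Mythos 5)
Your membership direction is exactly the paper's: Theorems~\ref{th:PSPACE-feasibility} and~\ref{th:PSPACE-survivability} give the PSPACE upper bounds, and the bound $k$ on fact size makes them genuine PSPACE algorithms. The gap is in your hardness reduction. You propose to ``encode the goal as (the negation of) the critical configuration specification, so that a compliant trace is precisely one that drives the simulation toward the goal,'' but the formalism does not support this: a critical configuration specification is a positive, existential pattern-matching condition (some $\Sscr_i\sigma \subseteq \Sscr$ with $\Cscr_i\sigma$ valid), and compliance means \emph{avoiding} such patterns forever. ``The goal has not yet been reached'' is not expressible as a critical specification, so your reduction as stated does not typecheck. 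To force goal-reaching you would need an explicit deadline gadget---e.g., an initial fact $Start@0$, a rule that consumes $Start$ exactly when the goal multiset is present, and a critical pair $\tup{\{Start@T_1, Time@T\}, T > T_1 + N}$---and $N$ must majorize the reachability horizon, which is exponential in $m$ and $k$ (on the order of $L_\Sigma(m,k,\Dmax)$). That is representable in binary, but it inflates $\Dmax$ and requires an argument you do not supply. A second soft spot: for survivability you reduce from ``the goal is unavoidable,'' but \cite{kanovich11jar} establishes PSPACE-completeness only of the existential reachability problem; the hardness of unavoidability is not something you can simply cite, and your remark about ``making the simulation deterministic'' gestures at the fix without carrying it out.

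The paper (Appendix~\ref{sec: app-PSPACE}) avoids both problems by working with the polarity the formalism actually offers: it encodes a Turing machine accepting in space $n$ as a PTS in which the \emph{halting} states are the critical configurations, so that compliant infinite traces correspond exactly to non-terminating runs; PSPACE-hardness of realizability then follows because PSPACE is closed under complement. Moreover, each machine instruction is compiled into a chain of five rules that must fire in succession, so under the lazy time sampling the admissible behavior is forced, and survivability hardness comes essentially for free from the same (deterministic) encoding, since realizability and survivability coincide when the trace is unique. So your decomposition (membership plus reduction from a known PSPACE-complete problem) matches the paper in outline, but your reduction inverts the avoidance semantics of criticality, and without the deadline gadget and a hardness source for the universal problem, the hardness half of the corollary is not established.
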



\vspace{-3mm}
\subsection{Complexity Results for $n$-Time-Bounded Systems}
\vspace{-2mm}
We now consider the $n$-time-bounded versions of the Realizability and Survivability problems (Definitions~\ref{def:feasibility} and \ref{def:survivabilty}).

The following lemma establishes an upper-bound on the length of traces with exactly $n$ instances of tick rules for PTS. It follows immediately from Proposition~\ref{prop:bounded-length}.
\begin{lemma}
\label{lem:polysize}
Let $n$ be fixed  and assume $\Sigma,\Ascr,\Sscr_0, m,\CS,k,\Dmax$ as described in Theorem~\ref{th:PSPACE-feasibility}. For all traces $\Pscr$ of $\Ascr$ with exactly $n$ instances of the Tick rule, the length of $\Pscr$ is bounded by $(n+2)*m+n$.
\end{lemma}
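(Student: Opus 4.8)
The plan is to use the $n$ Tick applications to cut the trace $\Pscr$ into maximal blocks of instantaneous rules and to bound each block separately. Reading $\Pscr$ from left to right, the $n$ instances of the Tick rule partition the remaining (instantaneous) transitions into at most $n+1$ contiguous blocks: one possibly-empty block before the first Tick, $n-1$ blocks strictly between consecutive Ticks, and one possibly-empty block after the last Tick. Within any single block the global time is constant, since by definition only the Tick rule changes the $Time$ fact.

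For each of the $n-1$ internal blocks the bound comes directly from Proposition~\ref{prop:bounded-length}: a sub-sequence delimited by two consecutive Ticks advances the index by less than $m$, hence contains fewer than $m$ instantaneous steps. For the two boundary blocks the Proposition does not literally apply, as there is no delimiting Tick on one side, so I would re-run the monotonicity argument underlying it. Fix the constant global time $T$ of the block and let $\phi(\Sscr)$ count the facts of $\Sscr$ whose timestamp is strictly greater than $T$. Because $\Ascr$ is progressive, every instantaneous rule consumes only facts with timestamp $\le T$ and creates at least one fact with timestamp $> T$; hence no fact counted by $\phi$ is ever consumed inside the block, and each instantaneous step increases $\phi$ by at least one. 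Since $\Ascr$ is balanced, every configuration has exactly $m$ facts (by the proposition preceding Definition~\ref{def:progressing}), so $\phi \le m$ throughout and each boundary block likewise contains at most $m$ instantaneous steps.

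Summing the contributions then yields the bound: there are $n$ Tick transitions and at most $n+1$ instantaneous blocks, each contributing at most $m$ instantaneous transitions, so the total number of transitions is at most $(n+1)m + n$ and the number of configurations is at most $(n+1)m + n + 1 \le (n+2)m + n$ (using $m \ge 1$), which is exactly the claimed bound, with the slack between $n+1$ and $n+2$ absorbing the off-by-one between counting transitions and counting configurations. I expect the only real subtlety to be the treatment of the two boundary blocks: Proposition~\ref{prop:bounded-length} is stated for sub-sequences bracketed by two Ticks, so one must either re-derive the $\phi$-monotonicity bound for them as above, or observe that the argument is insensitive to which side the delimiting Tick lies on and so the same bound holds. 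Everything else is elementary counting.
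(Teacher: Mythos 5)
Your proof is correct and takes essentially the paper's route: the paper obtains Lemma~\ref{lem:polysize} exactly by splitting the trace at the $n$ Tick applications and bounding each instantaneous block via Proposition~\ref{prop:bounded-length}, which is your decomposition. Your explicit $\phi$-counting argument for the two boundary blocks (not literally covered by Proposition~\ref{prop:bounded-length}, and needed in particular to rule out an infinite tail of instantaneous steps after the last Tick) is a detail the paper leaves implicit under ``follows immediately,'' and it is handled correctly, including the $m\geq 1$ slack between counting transitions and configurations.
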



 We can check in polynomial time whether a trace is compliant and has exactly $n$ Ticks. Therefore, the $n$-time-bounded realizability problem is in NP as stated by the following theorem. Its proof is in the \mbox{Appendix \ref{sec: app-NP-feas}}.
\begin{theorem}
\label{thm:np-feasible}
 Let $n$ be fixed and assume $\Sigma,\Ascr,\Sscr_0,m,\CS,k,\Dmax$ and the functions $\next, \critical, \mustTick$ as described in Theorem~\ref{th:PSPACE-feasibility}. The problem of determining whether $\Ascr$ is $n$-time-bounded realizable with respect to the lazy time sampling, $\CS$ and $\Sscr_0$ is in NP with $\Sscr_0$ as the input.
\end{theorem}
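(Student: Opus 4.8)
The plan is to use the standard guess-and-check characterization of NP: a nondeterministic algorithm guesses a candidate trace and then verifies in deterministic polynomial time that it witnesses $n$-time-bounded realizability. The key enabling fact is Lemma~\ref{lem:polysize}, which bounds the length of any trace with exactly $n$ Tick instances by $(n+2)m+n$; since $n$ is fixed this is polynomial in $m$, so a witnessing trace can serve as a polynomial-size certificate.

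First I would fix the form of the certificate. By the bisimulation between traces over configurations and traces over $\delta$-configurations recalled above (see \cite[Corollary~7]{kanovich12rta}), it suffices to search for a compliant trace built from $\delta$-configurations. The certificate is therefore a sequence $\delta_0 \lra \delta_1 \lra \cdots \lra \delta_\ell$ with $\ell \le (n+2)m+n$, where $\delta_0$ is the $\delta$-representation of $\Sscr_0$. Each $\delta$-configuration contains at most $m$ facts (by the balanced condition), each of size at most $k$, together with at most $m-1$ truncated differences, every one of which is a value in $\{0,\dots,\Dmax\}\cup\{\infty\}$ and hence writable in $O(\log_2(\Dmax))$ bits. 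Consequently each $\delta$-configuration has size polynomial in $m$, $k$ and $\log_2(\Dmax)$, and the whole certificate, having polynomially many entries, has total size polynomial in these parameters. This is the point that needs the most care: one must confirm that truncating differences at $\Dmax$ (rather than recording absolute timestamps, which could in principle grow) keeps the encoding small while preserving everything the verification needs, which is precisely why $\next$, $\critical$ and $\mustTick$ were defined to operate on $\delta$-configurations.

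Next I would describe the deterministic verification. Walking along the guessed sequence, the algorithm checks that (i) $\delta_0$ equals the $\delta$-representation of $\Sscr_0$; (ii) each step $\delta_i \lra \delta_{i+1}$ corresponds to applying some rule of $\Ascr$ that $\next$ certifies applicable to $\delta_i$ and that yields $\delta_{i+1}$; (iii) the lazy time sampling is respected, i.e.\ whenever the step is a Tick, $\mustTick$ returns that no instantaneous rule applies to $\delta_i$; (iv) no $\delta_i$ is critical, as reported by $\critical$, which guarantees compliance with $\CS$; and (v) the number of Tick steps is exactly $n$, so that global time advances by exactly $n$. Each check invokes $\next$, $\critical$ or $\mustTick$ at most a polynomial number of times, and by assumption each such call runs in time polynomial in $m$, $k$ and $\log_2(\Dmax)$; hence the entire verification is deterministic polynomial time.

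Finally I would argue correctness of the reduction in both directions. If $\Ascr$ is $n$-time-bounded realizable, a witnessing trace exists; by Lemma~\ref{lem:polysize} it has length at most $(n+2)m+n$, and its $\delta$-image is a valid certificate that the algorithm accepts. Conversely, any accepted certificate is, via the bisimulation, the $\delta$-image of a genuine trace that is compliant, uses the lazy time sampling, and contains exactly $n$ Ticks, so $\Ascr$ is $n$-time-bounded realizable. Since the certificate is of polynomial size and verification runs in polynomial time, the problem lies in NP. I do not expect any serious obstacle beyond the bookkeeping of the certificate size bound; the substance is carried entirely by Lemma~\ref{lem:polysize} together with the polynomial-time assumptions on $\next$, $\critical$ and $\mustTick$.
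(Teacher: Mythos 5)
Your proposal is correct and follows essentially the same route as the paper's proof: both rest on Lemma~\ref{lem:polysize} to bound the trace length by $(n+2)m+n$ and then use the polynomial-time functions $\next$, $\critical$ and $\mustTick$ to check, along a nondeterministically chosen trace, compliance, adherence to the lazy time sampling, and that exactly $n$ Ticks occur. Your presentation as an explicit polynomial-size certificate with a deterministic verifier (plus the careful accounting of $\delta$-configuration encoding sizes) is just a standard reformulation of the paper's guess-as-you-go nondeterministic algorithm, not a different argument.
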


For NP-hardness, we encode the  NP-hard problem 3-SAT as an $n$-time-bounded realizability problem as done in our previous work~\cite{kanovich13esorics}. 
For more details, please see Appendix \ref{sec: app-SAT}. 


Recall that for $n$-time-bounded survivability property, we need to show that:
\begin{enumerate}
  \item $\Ascr$ is $n$-time-bounded realizable with respect to $\CS$;
  \item All traces using the lazy time sampling with exactly $n$ ticks are compliant with respect to $\CS$.
\end{enumerate}
As we have shown, the first sub-problem is NP-complete. The second sub-problem is reduced to checking that no critical configuration is reachable from $\Sscr_0$ by a trace using the lazy time sampling with less or equal to $n$ ticks. We do so by checking whether a critical configuration is reachable, which as realizability, we prove to be in NP. If a critical configuration is reachable then $\Ascr$ does not satisfy the second sub-problem, otherwise it does satisfy. Therefore, deciding the second sub-problem is in co-NP. Thus the $n$-timed survivability problem is in a class containing both NP and co-NP, \eg,  $\Delta_2^p$ of the polynomial hierarchy ($P^{NP}$)~\cite{papadimitriou07book}.  
\begin{theorem}
 Let $n$ be fixed and assume $\Sigma,\Ascr,\Sscr_0, m,\CS,k,\Dmax$ and the functions $\next, \critical, \mustTick$ as described in Theorem~\ref{th:PSPACE-feasibility}. The problem of determining whether $\Ascr$ satisfies $n$-time-bounded survivability with respect to the lazy time sampling, $\CS$ and $\Sscr_0$ is in the class 
 $\Delta_2^p$ of the polynomial hierarchy ($P^{NP}$) with  input $\Sscr_0$.
\end{theorem}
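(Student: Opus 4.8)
The plan is to reduce $n$-time-bounded survivability to two queries to an NP oracle, placing the whole problem in $P^{NP} = \Delta_2^p$. Recall that by Definition~\ref{def:survivabilty}, $\Ascr$ satisfies $n$-time-bounded survivability exactly when (i) $\Ascr$ is $n$-time-bounded realizable and (ii) every trace from $\Sscr_0$ using the lazy time sampling with exactly $n$ instances of the Tick rule is compliant. The strategy is to show that (i) is an NP question and that the \emph{negation} of (ii) is also an NP question, so that a deterministic polynomial-time machine can decide survivability by asking one query for each and accepting iff the first answers ``yes'' and the second answers ``no''. For condition (i), I would simply invoke Theorem~\ref{thm:np-feasible}, which already places $n$-time-bounded realizability in NP; this supplies the first oracle query $Q_1$.

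For condition (ii), the key observation is that its negation is a reachability question. Namely, some exactly-$n$-tick lazy trace is non-compliant if and only if some critical configuration is reachable from $\Sscr_0$ by a lazy trace using at most $n$ ticks. The forward direction is immediate, since a non-compliant exactly-$n$-tick trace contains a critical configuration at some prefix that uses at most $n$ ticks. For the converse I would use that $\Ascr$ is progressive: under the lazy sampling, from any configuration either some instantaneous rule is applicable or, by the defining behaviour of $\mustTick$, the Tick rule fires, so any finite lazy trace prefix with at most $n$ ticks can always be extended to one with exactly $n$ ticks. Appending such an extension after the first critical configuration yields a genuine exactly-$n$-tick trace that is non-compliant.

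This reachability question is in NP. By Lemma~\ref{lem:polysize}, any relevant trace has length bounded by $(n+2)*m+n$, and since $\Ascr$ is balanced each configuration along it has exactly $m$ facts of size at most $k$, with timestamps bounded in terms of $\Dmax$ and $n$; hence a witness trace is of polynomial size. Such a witness can be verified in polynomial time using $\next$ to check rule applicability at each step, $\mustTick$ to certify the lazy sampling whenever a Tick is taken, and $\critical$ to detect the critical configuration. This gives the second oracle query $Q_2$, and condition (ii) holds precisely when $Q_2$ answers ``no''. Combining the two, the $P^{NP}$ algorithm on input $\Sscr_0$ poses $Q_1$ and $Q_2$ to the NP oracle and accepts iff $Q_1$ returns ``yes'' and $Q_2$ returns ``no''; this is a deterministic polynomial-time computation with two oracle calls, hence in $\Delta_2^p$.

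The main obstacle I anticipate is the converse direction of the equivalence for condition (ii): I must argue carefully that a critical configuration reached early, with strictly fewer than $n$ ticks, can always be completed to a bona fide exactly-$n$-tick trace, which is exactly where the progressive property and the absence of deadlock under the lazy sampling are needed. Alongside this I must verify that the length bound of Lemma~\ref{lem:polysize} together with the balanced condition keeps the guessed witnesses polynomially sized, so that the verification of $Q_2$ genuinely runs in NP rather than merely in PSPACE.
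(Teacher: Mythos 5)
Your proposal is correct and follows essentially the same route as the paper: one NP oracle query for $n$-time-bounded realizability (Theorem~\ref{thm:np-feasible}), and a second NP query for the complement of the all-traces-compliant condition, phrased as reachability of a critical configuration by a lazy trace with at most $n$ ticks and witnessed by a trace of length at most $(n+2)\cdot m+n$ via Lemma~\ref{lem:polysize}, placing the problem in $P^{NP}=\Delta_2^p$. Your explicit argument that a critical configuration reached with fewer than $n$ ticks can always be extended to a genuine exactly-$n$-tick lazy trace (using progressivity and the lazy sampling) makes rigorous a step the paper leaves implicit, but the overall decomposition is identical.
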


\section{Bounded Simulations}
\label{sec:simul}

\begin{table}[t]
\begin{center}
\begin{small}
\begin{tabular}{cc}
\begin{tabular}{c|c}
  \toprule
  \multicolumn{2}{c}{\textbf{Exp 1:} ($N=1, P = 4, x_{max} = y_{max} = 10$) }\\
  \midrule
  $M = 50, e_{max} = 40$ & F, $st = 139$, $t = 0.3$  \\
  $M = 70, e_{max} = 40$ & F, $st = 203$, $t = 0.4$  \\
  $M = 90, e_{max} = 40$ & S, $st = 955$, $t = 2.3$  \\
  \bottomrule
\end{tabular}  
&
\begin{tabular}{c|c}
  \toprule
  \multicolumn{2}{c}{\textbf{Exp 3:} ($N=2, P = 9, x_{max} = y_{max} = 20$) }\\
  \midrule
  $M = 100, e_{max} = 500$ & F, $st = 501$, $t = 6.2$  \\
  $M = 150, e_{max} = 500$ & F, $st = 1785$, $t = 29.9$  \\
  $M = 180, e_{max} = 500$ & S, $st = 2901$, $t = 49.9$  \\
  $M = 180, e_{max} = 150$ & F, $st = 1633$, $t = 25.6$  \\
  \bottomrule
\end{tabular}  
\\ \\
\begin{tabular}{c|c}
  \toprule
  \multicolumn{2}{c}{\textbf{Exp 2:} ($N=2, P = 4, x_{max} = y_{max} = 10$) }\\
  \midrule
  $M = 30, e_{max} = 40$ & F, $st = 757$, $t = 3.2$  \\
  $M = 40, e_{max} = 40$ & F, $st = 389$, $t = 1.4$  \\
  $M = 50, e_{max} = 40$ & S, $st = 821$, $t = 3.2$  \\
  \bottomrule
\end{tabular}
&
\begin{tabular}{c|c}
  \toprule
  \multicolumn{2}{c}{\textbf{Exp 4:} ($N=3, P = 9, x_{max} = y_{max} = 20$) }\\
  \midrule
  $M = 100, e_{max} = 150$ & F, $st = 3217$, $t = 71.3$  \\
  $M = 120, e_{max} = 150$ & F, $st = 2193$, $t = 52.9$  \\
  $M = 180, e_{max} = 150$ & S, $st = 2193$, $t = 53.0$  \\
  $M = 180, e_{max} = 100$ & F, $st = 2181$, $t = 50.4$  \\
  \bottomrule
\end{tabular}  
\end{tabular}
\end{small}
\end{center}
\caption{$N$ is the number of drones, $P$ the number of points of interest, $x_{max} \times y_{max}$ the size of the grid, $M$ the time limit for photos, and $e_{max}$ the maximum energy capacity of each drone. We measured $st$ and $t$, which are, respectively, the number of states and time in seconds until finding a counter example if F (fail), and until searching all traces with exactly $4 \times M$ ticks if S (success).}
\label{tab:exp}
\vspace{-8mm}
\end{table}

For our bounded simulations, we implemented a more elaborated version of our running scenario in Maude using the machinery described in~\cite{talcott16quanticol}. Our preliminary results are very promissing. We are able to model-check fairly large systems for the bounded survivability. 

We consider $N$ drones which should have recent pictures, \ie, at most $M$ time units old, of $P$ points distributed in a grid $x_{max} \times y_{max}$, where the base station is at position $(\lceil x_{max}/2 \rceil,\lceil y_{max}/2 \rceil)$, and drones have maximum energy of $e_{max}$. Drones use soft-constraints, which take into account the drone's position, energy, and pictures, to rank their actions and they perform any one the best ranked actions. Drones are also able to share information with the base station.

Our simulation results are depicted in Table~\ref{tab:exp}. We model-checked the $n$-timed survivability of the system where $n = 4 \times M$. We varied $M$ and the maximum energy capacity of drones $e_{max}$. Our implementation~\cite{talcott16quanticol} finds counter examples quickly (less than a minute) even when considering a larger grid ($20 \times 20$) and three drones.\footnote{Although these scenarios seem small, the state space grow very fast: the state space of our largest scenario has an upper bound of
$(400 \times 399 \times 398) \times (150 \times 150 \times 150) \times (180 \times 4) \times (180)^9  \geq 3.06 \times 10^{37}$ states.} 


We can observe that our implementations can help specifiers to decide how many drones to use and with which energy capacities. For example, in Exp 3, drones required a great deal of energy, namely 500 energy units. Adding an additional drone, Exp 4, reduced the energy needed to 150 energy units. Finally, the number of states may increase when decreasing $M$ because with lower values of $M$, drones may need to come back more often to the base station causing them to share information and increasing the number of states.

\section{Related and Future Work}
\label{sec:related}


This paper introduced a novel sub-class of timed MSR systems called progressive which is defined by imposing syntactic restrictions on MSR rules. We illustrated with examples of Time Sensitive Distributed Systems that this is a relevant class of systems. We also introduced two verification problems which may depend on explicit time constraints, namely realizability and survivability, defined over infinite traces. We showed that both problems are PSPACE-complete for progressive timed systems, and when we additionally impose a bound on time, realizability becomes NP-complete and survivability is in $\Delta_2^p$ of the polynomial hierarchy. Finally, we demonstrated by experiments that it is feasible to analyse fairly large progressive systems using the rewriting logic tool Maude.

Others have proposed languages for specifying properties which allow explicit time constraint. We review some of the timed automata, temporal logic and rewriting literature. 

Our progressive condition is related to the \emph{finite-variability assumption} used in the temporal logic and timed automata literature~\cite{faella08entcs,laroussinie03tcs,lutz05time,alur91rex,alur04sfm}: in any bounded interval of time, there can be only finitely many observable events or state changes. Similarly, progressive systems have the property that only a finite number of instantaneous rules can be applied in any bounded interval of time (Proposition~\ref{prop:bounded-length}). Such a property seems necessary for the decidability of many temporal verification problems.

As we discussed in much more detail in the Related Work section of our previous work~\cite{kanovich.mscs}, there are some important differences between our timed MSR and timed automata~\cite{alur91rex,alur04sfm} on both the expressive power and decidability proofs. For example, a description of a timed MSR system uses first order formulas with variables, whereas timed automata are able to refer only to transition on ground states. That is, timed MSR is essentially a first-order language, while timed automata are propositional. If we replace a first order description of timed MSR by all its instantiations, that would lead to an exponential explosion. Furthermore, in contrast with the timed automata paradigm, in timed MSR we can manipulate in a natural way the facts both in the past, in the future, and in the present.

The temporal logic literature has proposed many languages for the specification and verification of timed systems. While many temporal logics include quantitative temporal operators, \eg~\cite{lutz05time,laroussinie03tcs},   this literature does not discuss notions similar to realizability and survivability notions introduced here. In addition to that, our specifications are executable. Indeed, as we have done here, our specifications can be executed in Maude.

The work~\cite{alpern87dc,clarkson10jcs} classifies traces and sets of traces as safety, liveness or properties that can be reduced to subproblems of safety and liveness. Following this terminology, properties relating to both of our problems of realizability and survivability (that involve infinite traces) contain elements of safety as well as elements of liveness.  Properties relating to the $n$-time-bounded versions of realizability and survivabilty could be classified as safety properties. We do not see how to express this in the terms of~\cite{alpern87dc,clarkson10jcs}. We intend to revisit this in future work.

Real-Time Maude is a tool for simulating and analyzing
real-time systems. Rewrite rules are partitioned into
instantaneous rules and rules that advance time, where
instantaneous rules are given priority. Time advance rules
may place a bound on the amount of time to advance, but do
not determine a specific amount, thus allowing continual
observation of the system. Time sampling strategies are
used to implement search and model-checking analyses.
{\"{O}}lveczky and Messeguer~\cite{olveczky07entcs} investigate conditions under
which the maximal time sampling strategy used in Real-Time
Maude is complete. One of the conditions required is
tick-stabilizing which is similar to progressive and the
finite variability assumption in that one assumes a bound
on the number of actions applicable in a finite time.


Cardenas~\etal~\cite{cardenas08icdcs} discuss possible verification problems of cyber-physical systems in the presence of malicious intruders. They discuss surviving attacks, such as denial of service attacks on the control mechanisms of devices. We believe that our progressive timed systems can be used to define sensible intruder models and formalize the corresponding survivability notions. This may lead to the automated analysis of such systems similar to the successful use of the Dolev-Yao intruder model~\cite{DY83} for protocol security verification. Given the results of this paper, for the decidability of any security problem would very likely involve a progressive timed intruder model.

Finally, we believe it is possible to extend this work to dense times given our previous work~\cite{kanovich15post}. There we assume a Tick rule of the form $Time@T \lra Time@(T + \epsilon)$. However, we do not consider critical configuration specifications. We are currently investigating how to incorporate the results in this paper with the results of~\cite{kanovich15post}. 




\begin{thebibliography}{10}



\bibitem{alpern87dc}
Bowen Alpern and Fred~B. Schneider.
\newblock Recognizing safety and liveness.
\newblock {\em Distributed Computing}, 2(3):117--126, 1987.

\bibitem{alur91rex}
Rajeev Alur and Thomas~A. Henzinger.
\newblock Logics and models of real time: {A} survey.
\newblock In {\em {REX} Workshop}, pages
  74--106, 1991.

\bibitem{alur04sfm}
Rajeev Alur and P.~Madhusudan.
\newblock Decision problems for timed automata: A survey.
\newblock In {\em SFM}, pages 1--24, 2004.

\bibitem{ban2024wrla}
Tajana~{Ban Kirigin}, Jesse~Comer, Max I.~Kanovich, Andre~Scedrov, and Carolyn~Talcott.
\newblock Time-Bounded Resilience.
\newblock In {\em 15th International Workshop on Rewriting Logic and its Applications (WRLA)}, Luxembourg, April 6-7, 2024, Springer Lecture Notes in Computer Science, to appear.

\bibitem{cardenas08icdcs}
Alvaro~A. C{\'{a}}rdenas, Saurabh Amin, and Shankar Sastry.
\newblock Secure control: Towards survivable cyber-physical systems.
\newblock In {\em {ICDCS}}, pages 495--500, 2008.

\bibitem{clarkson10jcs}
Michael~R. Clarkson and Fred~B. Schneider.
\newblock Hyperproperties.
\newblock {\em Journal of Computer Security}, 18(6):1157--1210, 2010.

\bibitem{clavel-etal-07maudebook}
Manuel Clavel, Francisco Dur\'an, Steven Eker, Patrick Lincoln, Narciso
  Mart\'i-Oliet, Jos\'e Meseguer, and Carolyn Talcott.
\newblock {\em All About Maude: A High-Performance Logical Framework}, 2007.

\bibitem{DY83}
D.~Dolev and A.~Yao.
\newblock On the security of public key protocols.
\newblock {\em IEEE Transactions on Information Theory}, 29(2):198--208, 1983.

\bibitem{durgin04jcs}
Nancy~A. Durgin, Patrick Lincoln, John~C. Mitchell, and Andre Scedrov.
\newblock Multiset rewriting and the complexity of bounded security protocols.
\newblock {\em Journal of Computer Security}, 12(2):247--311, 2004.

\bibitem{enderton}
Herbert~B. Enderton.
\newblock {\em A mathematical introduction to logic}.
\newblock Academic Press, 1972.

\bibitem{faella08entcs}
Marco Faella, Axel Legay, and Mari{\"{e}}lle Stoelinga.
\newblock Model checking quantitative linear time logic.
\newblock {\em Electr. Notes Theor. Comput. Sci.}, 220(3):61--77, 2008.

\bibitem{kanovich13ic}
Max Kanovich, Tajana~Ban Kirigin, Vivek Nigam, and Andre Scedrov.
\newblock Bounded memory {D}olev-{Y}ao adversaries in collaborative systems.
\newblock {\em Inf. Comput.}, 2014.

\bibitem{kanovich15post}
Max Kanovich, Tajana~Ban Kirigin, Vivek Nigam, Andre Scedrov, and Carolyn
  Talcott.
\newblock Discrete vs. dense times in the analysis of cyber-physical security
  protocols.
\newblock In {{POST}}, pages 259--279, 2015.

\bibitem{kanovich.mscs}
Max Kanovich, Tajana~Ban Kirigin, Vivek Nigam, Andre Scedrov, and Carolyn
  Talcott.
\newblock A rewriting framework and logic for activities subject to
  regulations.
\newblock {\em Mathematical Structures in Computer Science}, 2015.
\newblock Published online.

\bibitem{kanovich13esorics}
Max~I. Kanovich, Tajana~Ban Kirigin, Vivek Nigam, and Andre Scedrov.
\newblock Bounded memory protocols and progressing collaborative systems.
\newblock In {\em ESORICS}, 2013.

\bibitem{kanovich12rta}
Max~I. Kanovich, Tajana~Ban Kirigin, Vivek Nigam, Andre Scedrov, Carolyn~L.
  Talcott, and Ranko Perovic.
\newblock A rewriting framework for activities subject to regulations.
\newblock In {\em RTA}, pages 305--322, 2012.

\bibitem{kanovich11jar}
Max~I. Kanovich, Paul Rowe, and Andre Scedrov.
\newblock Collaborative planning with confidentiality.
\newblock {\em J. Autom. Reasoning}, 46(3-4):389--421, 2011.

\bibitem{laroussinie03tcs}
Fran{\c{c}}ois Laroussinie, Philippe Schnoebelen, and Mathieu Turuani.
\newblock On the expressivity and complexity of quantitative branching-time
  temporal logics.
\newblock {\em Theor. Comput. Sci.}, 2003.

\bibitem{lutz05time}
Carsten Lutz, Dirk Walther, and Frank Wolter.
\newblock Quantitative temporal logics: {PSPACE} and below.
\newblock In {\em {TIME}}, pages 138--146, 2005.

\bibitem{olveczky07entcs}
Peter~Csaba {\"{O}}lveczky and Jos{\'{e}} Meseguer.
\newblock Abstraction and completeness for real-time maude.
\newblock {\em Electr. Notes Theor. Comput. Sci.}, 176(4):5--27, 2007.

\bibitem{olveczky08tacas}
Peter~Csaba {\"{O}}lveczky and Jos{\'{e}} Meseguer.
\newblock The real-time maude tool.
\newblock In {\em {TACAS} 2008}, pages 332--336, 2008.

\bibitem{papadimitriou07book}
Christos~H. Papadimitriou.
\newblock {\em Computational complexity}.
\newblock Academic Internet Publ., 2007.

\bibitem{savitch}
W.~J. Savitch.
\newblock Relationship between nondeterministic and deterministic tape classes.
\newblock {\em Journal of Computer and System Sciences}, 4:177--192, 1970.

\bibitem{talcott15wirsing}
Carolyn~L. Talcott, Farhad Arbab, and Maneesh Yadav.
\newblock Soft agents: Exploring soft constraints to model robust adaptive
  distributed cyber-physical agent systems.
\newblock In {\em Software, Services, and Systems - Essays Dedicated to Martin
  Wirsing}, pages 273--290, 2015.

\bibitem{talcott16quanticol}
Carolyn Talcott, Vivek Nigam, Farhad Arbab, and Tobias Kapp\'e.
\newblock Formal specification and analysis of robust adaptive distributed
  cyber-physical systems.
\newblock In {\em
  Formal Methods for the Quantitative Evaluation of Collective Adaptive
  Systems} 2016.
\end{thebibliography}

\newpage

\appendix

\section{Bisimulation result}
\label{sec: app-bisimulation}

\begin{theorem}
\label{thm:delta configurations}
For any timed MSR $\Ascr$ the equivalence relation between configurations 
 is well-defined with respect to the actions of the system (including time advances), lazy time scheduling and critical configurations. Any compliant trace starting from the given initial configuration can be conceived as a compliant trace over  $\delta$-representations. 
\end{theorem}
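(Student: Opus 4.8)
The plan is to exhibit the quotient map $\Sscr \mapsto \delta_{\Sscr,\Dmax}$ as a strong bisimulation that simultaneously respects rule application, criticality, and the lazy-sampling side condition on the Tick rule. The single observation driving every case is that $\Dmax$ is by construction an upper bound on \emph{all} numeric quantities that any rule guard, any offset $D_i$, or any constraint of $\CS$ can refer to; hence the truncated differences stored in a $\delta$-configuration already retain every time-related quantity that the system is able to inspect.

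First I would prove a \emph{readout lemma}: from $\delta_{\Sscr,\Dmax}$ one recovers, for every ordered pair of facts $P@t_1$ and $Q@t_2$ in $\Sscr$, the exact value of $t_2 - t_1$ whenever $t_2 - t_1 \le \Dmax$, and otherwise the sole information that $t_2 - t_1 > \Dmax$. This is obtained by summing the consecutive truncated differences between $P$ and $Q$ in the canonical order, saturating the sum at $\infty$ as soon as it exceeds $\Dmax$ or meets an $\infty$ entry. Since every guard constraint has the form $T > T' \pm N$ or $T = T' \pm N$ with $N \le \Dmax$, and likewise for the constraints $\Cscr_j$ of the critical specification, the readout lemma shows that the truth value of each such constraint is a function of $\delta_{\Sscr,\Dmax}$ alone. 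Because a $\delta$-configuration also records every fact symbol together with its term arguments, matching of a rule pre-condition $W\sigma \subseteq \Sscr$ and matching of a critical pattern $\Sscr_i\sigma \subseteq \Sscr$ are $\equiv_{\Dmax}$-invariant as well. This already yields invariance of criticality and of rule \emph{applicability}, and therefore of the lazy-sampling predicate, which merely asks whether some instantaneous rule is applicable.

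Next I would check that the \emph{outcome} of each transition is determined up to $\equiv_{\Dmax}$. For the Tick rule one increments the single $Time$ fact's timestamp by one; this acts deterministically on the truncated consecutive differences adjacent to $Time$ (decrementing or incrementing them, saturating at the boundary, and possibly advancing $Time$ past facts sharing its timestamp according to the canonical tie-break), and leaves all differences not involving $Time$ untouched. For an instantaneous rule the retained $Time@T$ fact is unchanged while each created fact $Q_j@(T+D_j)$ has, by $D_j \le \Dmax$, a difference to $Time$ that is recorded exactly; inserting these facts into canonical order and recomputing the truncated differences to their neighbours uses only data already present in the $\delta$-configuration. Consequently, if $\Sscr_1 \equiv_{\Dmax} \Sscr_1'$ and $\Sscr_1 \lra_r \Sscr_2$, then the corresponding instance of $r$ is applicable to $\Sscr_1'$ and yields some $\Sscr_2'$ with $\Sscr_2 \equiv_{\Dmax} \Sscr_2'$; symmetrically every step over $\delta$-configurations lifts. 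Assembling the three invariances, a lazy compliant trace over configurations projects onto a lazy compliant trace over $\delta$-representations and conversely each such $\delta$-trace lifts, which is exactly the stated correspondence.

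The main obstacle is the arithmetic at the truncation boundary in the Tick case: I must verify that when a consecutive difference sits exactly at $\Dmax$ or at $0$, the saturated update still agrees with the true timestamp arithmetic, and that reordering facts with equal timestamps is consistent with the fixed canonical tie-break, so that no information distinguishing two $\equiv_{\Dmax}$-related configurations is created or destroyed by a single tick. Once this boundary bookkeeping is shown to commute with the equivalence, the remaining cases are routine.
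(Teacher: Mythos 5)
Your proposal follows essentially the same route as the paper's proof: a case analysis showing that constraint satisfaction---and hence rule applicability, criticality, and the lazy-sampling condition---is invariant under $\equiv_{\Dmax}$, that the Tick and instantaneous updates act on $\delta$-configurations in a determined way (Tick adjusting only the truncated differences involving $Time$, created facts $Q@(T+D)$ having their distance $D \leq \Dmax$ to $Time$ recorded exactly), followed by an induction on trace length transferring compliant traces in both directions. Your explicit readout lemma and the boundary bookkeeping you flag merely make precise steps the paper asserts without detail (e.g., that equivalent configurations satisfy the same set of constraints), so the two arguments coincide in substance.
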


\begin{proof} 
We firstly show that application of rules on $\delta$-representations is independent of the choice of configuration from the same class. Assume 
$\Sscr_1$ and $\Sscr_2$ are equivalent configurations, and assume that $\Sscr_1$ is transformed to $\Sscr'_1$ by means of
a rule~$\alpha$, as shown in the diagram below.
Recall that equivalent configurations satisfy the same set of constraints.
Hence, the rule~$\alpha$ is  applicable to $\Sscr_2$ and will transform 
$\Sscr_2$ into some $\Sscr_2'$:
\[
\begin{array}{cccc}
\Sscr_1 & \to_{\alpha}& \Sscr_1'\\[4pt]
\biginterleave \ 
& & \\[4pt]
\Sscr_2 & \to_{\alpha} \ & \ \Sscr_2'
\end{array}
\]
It remains to show that $\Sscr_1'$ is equivalent to $\Sscr_2'$. 
We consider the two types of rules for $\alpha$, namely, 
time advances and instantaneous rules.

Let the time advance transform
${\Sscr_1}$ into~${\Sscr_1}'$, and $\Sscr_2$ to $\Sscr_2'$.
Since only the timestamp $T$ denoting the global time in $Time@T$ is increased by 1, and the rest of the configuration remains unchanged, 
only truncated time differences involving $Time$ change in the resulting $\delta$-representations. 
Because of the equivalence $S_1 \equiv_{\Dmax} S_2$ , for a fact $P@T_P^1$ in $\Sscr_1$ with $T_P^1\leq T^1$, $Time@T^1$ and $\delta_{P,Time}= t$, we have $ P@T_P^2$ with ${T}_P^2 \leq {T^2}$, $Time@{T^2}$ and $\delta_{P,Time}= t$ in $\Sscr_2$ as well. Therefore, we have
$$\delta_{P,Time}= 
\left\{\begin{array}{ccl}
t+1 & , & \ \textrm{ provided } \ t+1 \leq \Dmax\\
\infty & , & \ \textrm{ otherwise }
\end{array}\right.
$$ 
both in $\Sscr_1'$ and $\Sscr_2'$. On the other hand, for any future fact $Q@T^Q$
with $\delta_{Time,Q}= t$ in $\Sscr_1$ and in $\Sscr_2$, we get $\delta_{Time,Q}= t-1$ in both $\Sscr_1'$ and $\Sscr_2'$.
Therefore, ${\Sscr_1}'$ and $\Sscr_2'$ are equivalent. 
Recall that since all configurations in the trace are future bounded, $t < \infty$, so $t-1$ is well-defined.

The reasoning for the application of instantaneous rules is similar. Each created fact in $\Sscr_1'$ and $\Sscr_2'$ is of the form
$P@(T^1+d)$ and $P@(T^2+d)$ , where $T^1$ and $T^2$ represent
global time in $\Sscr_1$ and $\Sscr_2$, respectively. Therefore
each created fact has the same difference, $d$, to the global time in the corresponding configuration. This implies
that the created facts have the same truncated time
differences to the remaining (unchanged) facts. 
Namely, ~$\delta_{Time,P}=d< \infty$,
~hence for $P@t_P$, $R@t_R$ and $Time@t$ ~with 
~$t \leq t_R \leq t_P$, ~$$\delta_{R,P}= \delta_{Time,P}- \delta_{Time,R}\ . $$
Notice here that $\delta_{Time,R}<\infty$ because all configurations are future bounded, so the above difference is well-defined (finite). 
Similarly, when ~$t \leq t_P \leq t_R$, ~$$\delta_{P,R}= \delta_{Time,R}- \delta_{Time,P}\ .$$
Hence ${\Sscr_1}'$ and
$\Sscr_2'$ are equivalent.
Therefore, application of rules on $\delta$-representations defined through corresponding configurations is well-defined, \ie, the abstraction of configurations to $\delta$-representations w.r.t. application of rules is complete.

The abstraction is also sound. Namely, from a compliant trace over $\delta$-representations, we can extract a concrete compliant trace over configurations.
Although any given \mbox{$\delta$-representation} corresponds to an infinite number of configurations, for a given initial configuration $\Sscr_0$, we have the initial $\delta$-representation
\ $\delta_0= \delta_{\Sscr_0}$.
The existence of a trace over configurations corresponding to the given (possibly infinite) trace over $\delta$-representations is then easily proven by induction. 

Since equivalent configurations satisfy the same set of constraints, 
$\Sscr_1$ is a critical configuration if and only if $\Sscr_2$ is a critical configuration, \ie,~ if and only if $\delta_{\Sscr_1}$ is critical. 
By induction on the length of the (sub)trace, it  follows that, given a timed MSR and a critical configuration specification $\Cscr\Sscr$, 
any (possibly infinite) trace over configurations is compliant if and only if the corresponding trace over $\delta$-representations is compliant.

Notice that, using the lazy time sampling in a trace $\Pscr$, $Tick$ rule is applied to some $\Sscr_i$ in $\Pscr$ if and only if no instantaneous rule can be applied to 
$\Sscr_i$. Since $\Sscr_i$ and its $\delta$-representation, $\delta_{\Sscr_i}$, satisfy the same set of constraints, it follows that 
$Tick$ rule is applied to $\delta_{\Sscr_i}$ iff~$Tick$ rule is applied to $\Sscr_i$. Hence, a trace over configurations uses the lazy time sampling iff the corresponding trace over  $\delta$-representations uses the lazy time sampling.
\qed
\end{proof}

\section{Bound on the number of different $\delta$-configurations (Lemma \ref{lemma:numstates})}
\label{sec: app-number-delta}

\begin{proof}
Let the given finite alphabet contain $J$ predicate symbols and $E$ constant and function symbols.
Let the initial configuration $\Sscr_0$ contain $m$ facts.
Let 
$$
[\,Q_1, \,\delta_{Q_1, Q_2}, \,Q_2, \ldots, \,Q_{m-1}, \,\delta_{Q_{m-1}, Q_m}, \,Q_m\,]
$$
be a $\delta$-representation with $m$ facts.
There are $m$ slots for predicate names and at most $mk$ slots for constants and function symbols, where $k$ is the bound on the size of facts. 
Constants can be either constants in the initial alphabet or names for fresh values. 

Following \cite{kanovich13ic}, we need to consider only $2m k$ names for fresh values (nonces). 
Namely, we can fix a set of only $2mk$ nonce names. Then, whenever an action creates some fresh values, instead of creating new constants that have not yet appeared in the trace, we can choose names from this fixed set so that they are different from any constants in the enabling configuration. In that way, although this finite set is fixed in advance, each time a nonce name is used, it appears fresh with respect to the enabling configuration. We are, hence, able to simulate an unbounded number of nonces using a set of only $2mk$ nonce names.

Finally, for the values $ \delta_{Q_i, Q_{i+1}}$, only time differences up to $\Dmax$ have to be considered together with the symbol $\infty$, and there
are $m-1$ slots for time differences in a $\delta$-representation. Therefore the number of different $\delta$-representations is bounded by ~$ (\Dmax+2)^{(m-1)} J^m (E + 2 m k)^{m k} $.
\ \qed
\end{proof}

\section{Encoding of a Turing Machine that accepts in Space $n$}
\label{sec: app-PSPACE}

We encode a deterministic
Turing machine~$M$ that accepts in space~$n$. We adapt the encoding in~\cite{kanovich13ic} to a PTS $\Tscr$ that uses the lazy time sampling. For readability, in the rules below, we do not explicitly write the set of constraints $\Cscr_r$. (see Definition~\ref{def:progressing}). 
This set is implicitly assumed.

Firstly, 
we introduce the following predicates:
\\[3pt]
- \mbox{$R_{i,\xi}$}, ~ where \mbox{$R_{i,\xi}@T_l$} denotes that
{\em ``the \mbox{$i$-}th cell contains symbol~$\xi$ since time $T_l$''},

\qquad where \mbox{$i\!=\! 0,1,..,n\!+\! 1$} and $\xi$ is a symbol
of the tape alphabet of~$M$;
and 
\\
- \mbox{$S_{j,q}$}, ~ which denotes that
{\em ``the \mbox{$j$-}th cell is scanned by~$M$ in state~$q$''},

\qquad where \mbox{$j\!=\! 0,1,..,n\!+\! 1$} and $q$ is a state of~$M$.
\\[3pt]
A Turing machine configuration is encoded by using the following multiset of facts:
$$
\begin{array}{r}
Time@T,\,S_{j,q}@T_1, \,R_{0,\xi_0}@T_2, \,R_{1,\xi_1}@T_2,
\,R_{2,\xi_2}@T_3,\cdots \qquad \qquad \qquad
\\ \cdots,
\,R_{n,\xi_n}@T_{n+2},
\,R_{n\!+\! 1,\xi_{n\!+\! 1}}@T_{n+3}.
\end{array}
$$
Each instruction $\gamma$ in $M$ of the form
\mbox{$q\xi \!\rightarrow\! q'\eta D$}, denoting
{\em ``if in state~$q$ looking at symbol\/~$\xi$,
replace it by\/~$\eta$,
move the tape head one cell in direction\/~$D$ along the tape,
and go into state\/~$q'$''},
is specified by the set of \mbox{$5(n\!+\! 2)$} progressing rules of the form:
\begin{equation}
\begin{array}{l@{\qquad}l@{\qquad}l}
 Time@T, {R_{i,\xi}@T'}, \red{S_{i,q}@T}  \mid  T' \leq T   \lra
  \ Time@T, {R_{i,\xi}@T'}, \blue{F_{i,\gamma}@(T+1)}
    \\
 Time@T, {F_{i,\gamma}@T}, \red{R_{i,\xi}@T'}  \mid  T' \leq T  \lra
 \ Time@T, {F_{i,\gamma}@T}, \blue{H_{i,\gamma}@(T+1)}
  \\
 Time@T,{H_{i,\gamma}@T}, \red{F_{i,\gamma}@T'}   \mid  T' \leq T \lra
 \ Time@T, {H_{i,\gamma}@T}, \blue{R_{i,\eta}@(T+1)}
    \\
 Time@T, {R_{i,\eta}@T}, \red{H_{i,\gamma}@T'}  \mid  T' \leq T   \lra 
  Time@T, R_{i,\eta}@T, \blue{S_{i_D,q'}@(T+1)},
\end{array}
\label{ax-ATM}
\end{equation}
where~\mbox{$i\!=\! 0,1,..,n\!+\! 1$},
$F_{i,\gamma}$, $H_{i,\gamma}$~are auxiliary predicates, and~
\mbox{$i_D:= i \!+\! 1$} if $D$ is {\em right},
~\mbox{$i_D:= i \!-\! 1$} if $D$ is {\em left},
and ~\mbox{$i_D:= i$}, otherwise.

It is easy to check that above rules are necessarily applied in succession, \ie,~the only transition possible is of the following form:
\begin{equation}
\begin{array}{c}
%
Time@t,  \,R_{i,\xi}@t, \,S_{i,q}@t \lra \\
Time@t,  \,R_{i,\xi}@t, \,F_{i,\gamma}@(t+1) \lra_{Tick} \\
Time@(t+1), \,R_{i,\xi}@t, \,F_{i,\gamma}@(t+1) \lra \\
Time@(t+1),\,F_{i,\gamma}@(t+1), \,H_{i,\gamma}@(t+2) \lra_{Tick}\\
Time@(t+2),\,F_{i,\gamma}@(t+1), \,H_{i,\gamma}@(t+2) \lra \\
Time@(t+2),\,R_{i,\eta}@(t+3), \,H_{i,\gamma}@(t+2) \lra_{Tick} \\
Time@(t+3),\,R_{i,\eta}@(t+3), \,H_{i,\gamma}@(t+2) \lra \\
Time@(t+3), \,R_{i,\eta}@(t+3), \,S_{i_D,q'}@(t+4) \lra_{Tick} \\
Time@(t+4), \,R_{i,\eta}@(t+3), \,S_{i_D,q'}@(t+4). 
\end{array}
\end{equation}
Notice that after the application of any of the above rules (Eq.~\ref{ax-ATM}),
no instantaneous rule is applicable, and therefore the $Tick$ rule applies. 
Thus, the lazy time sampling is reflected in the encoding so that time is advancing after each step of the Turing machine.

A critical configuration corresponds to a final state of the Turing machine, specified by the following critical configuration specification:

\vspace{2pt}
\(
\{\,\tup{\,\{S_{i_D,q_F}@T\}, \emptyset\,} \mid q_F \textrm{ is an accepting or rejecting state of } M\} .
\)

\vspace{2pt}
By the above encoding we reduce the problem of a Turing machine termination in $n$-space to the realizability problem.
More precisely, the given Turing machine~$M$ does not terminate if and only if there is an infinite time compliant trace in the obtained PTS 
$\Tscr$ that uses the lazy time sampling. 
The encoding is easily proved to be sound and faithful (see~\cite{kanovich13ic} for more details).
Thus, the realizability problem is PSPACE-hard. 


Since $M$ is deterministic, all traces representing the Turing machine computation from the initial configuration are one and the same. It follows that the survivability problem is also PSPACE-hard.

\section{Realizability PSPACE upper bound proof (Theorem \ref{th:PSPACE-feasibility})}
\label{sec: app-PSPACE feas}

\begin{proof}
Let $\Tscr$ be a PTS constructed over finite alphabet $\Sigma$ with $J$ predicate symbols and $E$ constant and function symbols. 
Let $\CS$ be a critical configuration specification constructed over $\Sigma$ and $\Sscr_0$ be a given initial configuration.
Let $m$ be the number of facts in the initial configuration $\Sscr_0$, 
$k$ an upper bound on the size of facts, and
$\Dmax$ a natural number that is an upper bound on the numeric values appearing in $\Sscr_0, \Tscr$ and $\CS$.

We propose a non-deterministic algorithm
that accepts whenever there is a compliant trace starting from $\Sscr_0$ in which time tends to infinity and which uses the lazy time sampling.
We then apply Savitch's Theorem to determinize this algorithm. 

In order to obtain the PSPACE result we rely on the equivalence among configurations which enables us to search for traces over $\delta$-representations~(Proposition \ref{thm:delta configurations})
instead of searching for traces over concrete configurations.

Because of Lemma~\ref{lem:lengthPSPACE}, in the search for compliant traces, it suffices to consider traces of size bounded by the number of different $\delta$-representations, ~$L_\Sigma(m,k,\Dmax) $ (stored in binary). Recall that \
$
L_\Sigma(m,k,\Dmax) \leq (\Dmax + 2)^{(m-1)} J^m (E + 2 m k)^{m k} \ .
$

Let $i$ be a natural number such that \ $0 \leq i \leq L_\Sigma(m,k,\Dmax) +1$.
The algorithm starts with $i=0$ and $W_0$ set to be the $\delta$-representation of $\Sscr_0$, and iterates the following sequence of operations:

\begin{quote}
\begin{enumerate}
\item If \ $W_i$ is a critical $\delta$-representation, \ie, if $\critical(W_i) = 1$, then return FAIL, otherwise continue;
\item If \ $i > L_\Sigma(m,k,\Dmax) + 1$, then ACCEPT;
else continue;
\item If \mustTick $(W_i)=1$, then replace $W_i$ by $W_{i+1}$ obtained from $W_i$ by applying  $Tick$ rule; Otherwise, non-deterministically guess an instantaneous rule, $r$, from $\Tscr$ applicable to $W_i$,
\ie, such a rule $r$ that $\next(r,W_i) = 1$. If so,
replace $W_i$ with $\delta$-representation $W_{i+1}$
resulting from applying  rule $r$ to $\delta$-representation
$W_i$. 
Otherwise, FAIL;
\item Set \ $ i = i + 1$ .
\end{enumerate}
\end{quote}

\vspace{3pt}
We now show that this algorithm runs in polynomial space.
The greatest number reached by the counter is $ L_\Sigma(m,k,\Dmax) $, which stored in binary encoding takes 
space \ $ \log(L_\Sigma(m,k,\Dmax) + 1)$ \ bounded by:
\[
\begin{array}[]{lcl}
m\log(J) + (m-1)\log(\Dmax + 2) + mk\log(E + 2 mk).
\end{array}
\]
Therefore, in order to store the values of the step-counter, one only needs space that is polynomial in the given inputs.
Also, any $\delta$-representation, $W_i$
can be stored in space that is polynomial to the given inputs. 
Namely, since $W_i$ is of the form 
$$ [\,Q_1,\,\delta_{Q_1,Q_2},\,Q_2, \ldots, \,Q_{m-1}, \,\delta_{Q_{m-1},Q_m}, Q_m\,]$$
and values of the truncated time differences, $\delta_{i,j}$, are bounded, each $W_i$ can be stored in space 
\ $mk+(m-1)(\Dmax+2)$,  polynomially bounded with respect to the inputs.

Finally, in step 3. algorithm needs to store the rule $r$. This is done by remembering two $\delta$-representations, while moving from one $\delta$-representation to another is achieved by updating the facts, updating the positions of facts and the corresponding truncated time differences and continue. Hence, step 3. can be performed in space polynomial to $m,k, log_2(\Dmax)$ and the sizes of $\next$ and \mustTick.
As assumed, functions $\next$, $\critical$ and \mustTick\ run in PSPACE w.r.t. $m,k$ and $log_2(\Dmax)$.
\qed
\end{proof}

\vspace{1em}
\section{Survivability PSPACE upper bound proof (Theorem \ref{th:PSPACE-survivability}) }
\label{sec: survive-PSPACE}

\begin{proof}
We extend the proof of Theorem \ref{th:PSPACE-feasibility} to survivability problem using the same notation and making the same assumptions. 

In order to prove that $\Tscr$ satisfies survivability with respect to the lazy time sampling, $\CS$ and $\Sscr_0$, we need to show that all infinite traces that are using the lazy time sampling and are starting from $\Sscr_0$ are compliant with respect to $\CS$.
Since $\Tscr$ is progressing, in any infinite trace time necessarily tends to infinity, as per Proposition \ref{prop:progressing}.

Based on our bisimulation result (Proposition \ref{thm:delta configurations})
we propose the search algorithm over $\delta$-representations, instead of searching over concrete configurations.
We rely on Lemma~\ref{lem:lengthPSPACE} and search only for traces of size bounded by the number of different $\delta$-representations, ~$L_\Sigma(m,k,\Dmax) $.

In order to prove survivability we first check realizability by using the algorithm given in the proof of Proposition \ref{th:PSPACE-feasibility}. Notice that this algorithm is in PSPACE with respect to the inputs of survivability as well.

Next we check that no critical configuration is reachable form $\Sscr_0$ using the lazy time sampling.
The following algorithm accepts when a critical configuration is reachable, and fails otherwise.
It begins with $i=0$ and $W_0$ set to be the $\delta$-representation of $\Sscr_0$, and iterates the following sequence of operations:

\begin{quote}
\begin{enumerate}
\item If \ $W_i$ is representing a critical configuration, \ie, if $\critical(W_i) = 1$, then return ACCEPT, otherwise continue;
\item If \ $i > L_\Sigma(m,k,\Dmax) $, then FAIL;
else continue;
\item If \mustTick $(W_i)=1$, then replace $W_i$ by $W_{i+1}$ obtained from $W_i$ by applying  $Tick$ rule; Otherwise,  non-deterministically guess an instantaneous rule, $r$, from $\Tscr$ applicable to $W_i$,
\ie, such a rule $r$ that $\next(r,W_i) = 1$. If so,
replace $W_i$ with  $\delta$-representation $W_{i+1}$
resulting from applying  rule $r$ to $\delta$-representation
$W_i$. 
Otherwise, continue;
\item Set \ $i = i + 1$.
\end{enumerate}
\end{quote}

This algorithm accepts iff 
there is a trace from $\Sscr_0$ that is not compliant and uses the lazy time sampling.
Since $\Tscr$ is a PTS,  any trace can be extened to an infinite time trace, including traces that use the lazy time sampling.
Therefore, the algorithm accepts if and only if there is an infinite time trace from $\Sscr_0$ that is not compliant and uses the lazy time sampling.

We take advantage of the fact that PSPACE, NPSPACE and co-PSPACE are all the same complexity class~\cite{savitch} and determinize the above algorithm and than switch the ACCEPT and FAIL. The resulting algorithm returns ACCEPT if and only if no critical configuration is reachable from the given initial configuration using the lazy time sampling,
\ie, if and only if all infinite time traces that start with the given initial configuration and use the lazy time sampling are compliant.

The proof that above algorithms run in polynomial space is very similar to that proof relating to Proposition \ref{th:PSPACE-feasibility}.
\qed
\end{proof}

\section{$n$-time-bounded realizability is in NP (Theorem~\ref{thm:np-feasible})}
\label{sec: app-NP-feas}

Let $n$ be fixed.
Let  $\Ascr$ be a timed MSR constructed over finite alphabet $\Sigma$  with $J$ predicate symbols and $E$ constant and function symbols. 
Let $\CS$ be a  critical configuration specification constructed over $\Sigma$ and $\Sscr_0$ be a given initial configuration. 
Let  $m$  be the number of facts in the initial configuration $\Sscr_0$, 
$k$  an upper bound on the size of facts, and
$\Dmax$ a natural number that is an upper bound on the numeric values appearing in $\Sscr_0, \Ascr$ and $\CS$.

Moreover, assume that the function $\next, \critical$ and $\mustTick$ run in polynomial time with respect to the size of $\Sscr_0$. We show that we check in polynomial time whether a given trace $\Pscr$ is compliant and has exactly $n$-ticks. Because of Lemma~\ref{lem:polysize}, we know that traces have size of at most $(n+2)*m+n$. Recall $n$ is fixed. Set $i := 0$ and $ticks := 0$. Let $W_i$ be the configuration at position $i$ in $\Pscr$. Iterate the following sequence of instructions:
\begin{enumerate}
  \item if $i > (n+2)*m+n$ then FAIL;
  \item if $\critical(W_i)=1$ then FAIL;
  \item if $ticks$ is equal to $n$, then ACCEPT;
  \item if $\mustTick(W_i) = 1$, then apply the Tick rule to $W_i$ obtaining the configuration $W_{i+1}$ and increment both $ticks$ and $i$;
  \item otherwise if $\mustTick(W_i) \neq 1$, then guess non-deterministically a rule $r$, such that $\next(r,W_i) = 1$, apply this rule $r$ to $W_i$, obtaining $W_{i+1}$, and increment $i$.
\end{enumerate}

Since the size of facts is bounded and the number of facts in any configuration of the trace is $m$, all steps are done in polynomial time.

\section{$n$-time-bounded realizability is NP-hard}
\label{sec: app-SAT}

For details please see \cite[Remark 7]{ban2024wrla}. In addition, note that the traces produced by the game encoding of \cite[Theorem 2]{ban2024wrla} contain no $Tick$ rules and therefore represent traces that use the lazy time sampling.

\end{document}